\title{Constant-Hop Spanners for More Geometric Intersection Graphs, with Even Smaller Size}
\author{Timothy M. Chan}{Department of Computer Science, University of Illinois at Urbana-Champaign, USA}{tmc@illinois.edu}{https://orcid.org/0000-0002-8093-0675}{Work supported by NSF Grant CCF-2224271.}
\author{Zhengcheng Huang}{Department of Computer Science, University of Illinois at Urbana-Champaign, USA}{zh3@illinois.edu}{}{}
\titlerunning{Constant-Hop Spanners for More Geometric Intersection Graphs}
\authorrunning{T.\,M. Chan and Z. Huang}
\keywords{Hop spanners, geometric intersection graphs, string graphs, fat objects, separators, shallow cuttings}
\newcommand{\R}{\mathbb{R}}
\newcommand{\Real}{\mathbb{R}}
\newcommand{\polylog}{\mathop{\rm polylog}}
\newcommand{\UU}{{\cal U}}
\newcommand{\IGNORE}[1]{}
\begin{document}

\maketitle

\begin{abstract}
In SoCG 2022, Conroy and T\'oth presented several constructions of sparse, low-hop spanners in geometric intersection graphs, including an $O(n\log n)$-size 3-hop spanner for $n$ disks (or fat convex objects) in the plane, and an $O(n\log^2 n)$-size 3-hop spanner for $n$ axis-aligned rectangles in the plane.  Their work left open two major questions: (i) can the size be made closer to linear by allowing larger constant stretch? and (ii) can near-linear size be achieved for more general classes of intersection graphs?

We address both questions simultaneously, by presenting new constructions of constant-hop spanners that have \emph{almost} linear size and that hold for a \emph{much larger} class of intersection graphs.  More precisely, we prove the existence of an $O(1)$-hop spanner for arbitrary \emph{string graphs} with $O(n\alpha_k(n))$ size for any constant~$k$, where $\alpha_k(n)$ denotes the $k$-th function in the inverse Ackermann hierarchy.  We similarly prove the existence of an $O(1)$-hop spanner for intersection graphs of $d$-dimensional fat objects with $O(n\alpha_k(n))$ size for any constant $k$ and $d$.

We also improve on some of Conroy and T\'oth's specific previous results, in either the number of hops or the size: we describe an $O(n\log n)$-size 2-hop spanner for disks (or more generally objects with linear union complexity) in the plane, and an $O(n\log n)$-size 3-hop spanner for axis-aligned rectangles in the plane.

Our proofs are all simple, using separator theorems, recursion, shifted quadtrees, and shallow cuttings.
\end{abstract}

\section{Introduction}

\emph{Spanners}---subgraphs of a given graph that preserve distances up to
some multiplicative factor---have numerous applications and have been studied extensively 
in both the graph algorithms 
and the computational geometry literature \cite{AhmedBSHJKS20,NarSmidBOOK}.
Traditionally, in computational geometry, the focus has been on Euclidean
spanners or metric spanners (i.e., spanners for a weighted complete
graph defined by $n$ points,
where the edge weights are Euclidean distances or distances under some metric).

Recently, spanners for geometric intersection graphs
have gained more attention.  A \emph{geometric intersection graph} is an
unweighted, undirected graph formed
by $n$ geometric objects, where the vertices are the objects, and we place an edge between two objects iff they intersect.
Such graphs are popularly studied in computational geometry
(e.g., see \cite{BringmannKKNP22,CabelloJ15,ChanS19,ChanS19a,GaoZ05}), and arise naturally in applications to wireless
communication.

Formally, in the unweighted setting, a \emph{$t$-hop spanner} of a graph $G$ is a subgraph $\widehat{G}$ of $G$, such that for each edge $uv\in E(G)$, there is a path of at most $t$ edges in $\widehat{G}$ from $u$ to $v$.  (It is sometimes just called a \emph{$t$-spanner}, but the term ``hop'' emphasizes that we are considering unweighted graph distances here.) The parameter $t$ is called the \emph{hop stretch}. For an arbitrary unweighted graph with $n$ vertices,
it is known~\cite{AlthoferDDJS93} that there exists a $t$-hop spanner with $O(n^{1+1/\lceil t/2\rceil})$ size (i.e., number of edges) for any constant integer $t\ge 3$; this bound is tight
assuming the Erd\H os girth conjecture~\cite{Erdos63}.  Our goal is to obtain better bounds
in the setting of geometric intersection graphs.

\subparagraph*{Previous results.}
Several papers studied hop spanners in the case of \emph{unit-disk graphs}, i.e.,
intersection graphs of unit disks in the plane:
Yan et al.~\cite{YanXD09} obtained 15-hop spanners with $O(n\log n)$ size.
Catusse et al.~\cite{CatusseCV10} obtained 5-hop spanners with $O(n)$ size
(with improvements on the hidden constant factor in the size bound
by Biniaz~\cite{Biniaz20} and Dumitrescu et al.~\cite{DumitrescuGT22}).
Dumitrescu et al.~\cite{DumitrescuGT22} also obtained 3-hop spanners with $O(n)$ size
and 2-hop spanners with $O(n\log n)$ size.
Finally, Conroy and Toth~\cite{conroy2022hop} obtained 2-hop spanners with $O(n)$ size.

Conroy and Toth~\cite{conroy2022hop} also initiated the study of hop spanners
for other families of geometric intersection graphs.  They obtained:

\begin{itemize}
\item 2-hop spanners for fat rectangles\footnote{All rectangles, squares, and hypercubes are axis-aligned throughout this paper.} (e.g., squares)
in the plane with $O(n\log n)$ size. (In fact, they proved a nearly matching lower bound
of $\Omega(n\log n/\log\log n)$ for squares, or for homothets of any fixed convex object in the plane.)
\item 3-hop spanners for fat convex objects (e.g., disks) in the plane with
$O(n\log n)$ size.
\item 3-hop spanners for arbitrary rectangles in the plane with
$O(n\log^2n)$ size.
\end{itemize}

\subparagraph*{Main questions.}
Conroy and T\'oth's work represented significant progress on hop spanners in
geometric intersection graphs, but it also raised a number of intriguing
questions:

\begin{enumerate}
\item Can the size of hop spanners be made closer to linear for the classes of
graphs they considered?  Their bounds for arbitrary disks, rectangles, etc.\ all have
extra logarithmic factors.  At the end of their paper, Conroy and T\'oth explicitly asked: ``is there a constant $t\in\mathbb{N}$ for which
every intersection graph of $n$ disks or rectangles admits a 
$t$-hop spanner with $O(n)$ edges?''
\item Ignoring logarithmic factors, can near-linear size hop spanners be
obtained for larger classes of geometric intersection graphs than the ones they
considered?  In particular, no $O(n\polylog n)$ size bounds were known for arbitrary line segments or
arbitrary triangles in the plane, or arbitrary balls in $\R^d$ for $d\ge 3$.
At the end of their paper, Conroy and T\'oth wrote: ``it would be interesting to see other classes of intersection graphs (e.g., for
strings or convex sets in $\R^2$, set systems with bounded VC-dimension
or semi-algebraic sets in $\R^d$) for which the general bound
of $O(n^{1+1/\lceil t/2\rceil})$ edges for $t$-hop spanners can be improved''.
\end{enumerate}

To appreciate the difficulty of these questions, it is worth mentioning
the connection to biclique cover size.  A \emph{biclique cover} of a graph $G$
refers to a collection of bicliques $A_1\times B_1,\ldots, A_s\times B_s$,
such that $E(G)=\bigcup_{i=1}^s (A_i\times B_i)$.  The \emph{size} of the cover
refers to $M=\sum_{i=1}^s (|A_i|+|B_i|)$.  Biclique covers are a standard
technique closely related to range searching, and have many applications in computational geometry (e.g., see \cite{AgarwalAAS94,Chan06,Chan23}).  Most classes of geometric intersection graphs
admit biclique covers with subquadratic size; in fact, for rectangles or axis-aligned
boxes, there are standard constructions of 
biclique covers with $O(n\polylog n)$ size (similar to the construction of range trees~\cite{AgaEriSURV,BerBOOK}).
Given a biclique cover of size $M$, it is easy to build a 3-hop spanner
of size $O(M)$, as noted by Conroy and T\'oth~\cite{conroy2022hop}, by just keeping
two stars per biclique.  In particular, Conroy and T\'oth's 3-hop, $O(n\log^2n)$-size spanners
for rectangles were obtained essentially by using range-tree-style divide-and-conquer.

However, biclique cover constructions typically require multiple logarithmic factor.  This makes the first question challenging for rectangles.
For non-axis-aligned objects, the biclique
cover size is even larger; for example, for line segments, the known upper bound is near $n^{4/3}$
(e.g., see~\cite{Chan23}).
Thus, this would not yield better bounds than for general graphs for hop stretch $t\ge 5$.  The situation is even worse for \emph{string graphs}, i.e.,
intersection graphs of curves (which could have large description complexity) in the plane.
New ideas are needed to address the second question.

\subparagraph*{Main new results.}
We make progress towards both of the above questions at once, by obtaining the following result:

\begin{itemize}
\item $O_k(1)$-hop spanners with  $O(n\alpha_k(n))$ size for arbitrary string graphs.
\end{itemize}

\noindent
Here, subscripts in the $O$ notation indicate variables that are assumed to be constant; the hidden constant factor may depend on such variables.  The function $\alpha_k(\cdot)$ denotes the $k$-th function in the inverse Ackermann hierarchy:
$\alpha_0(n)=n/2$, $\alpha_1(n)=\log n$, $\alpha_2(n)=\log^* n$ (the iterated logarithm), 
$\alpha_3(n)=\log^{**}n$ (the iterated iterated logarithm), etc.
Since these functions are extremely slow-growing as $k$ increases,
we thus get constant-hop spanners with \emph{almost} linear size.
Although inverse Ackermann has arisen in some past work on Euclidean spanners before
(namely, on the trade-off between size and hop-diameter~\cite{AryaDMSS95,LeMS22}), its appearance
here for hop spanners in geometric intersection graphs is still surprising.

%It is straightforward to extend the result (with at most two additional hops\footnote{
%We link each region $v$ to a maximal region $f(v)$ containing $v$; this increases the size by $O(n)$ only.  If two regions $u$ and $v$ intersect, then the boundaries of $f(u)$ and $f(v)$ must intersect.
%})
String graphs include
intersection graphs of arbitrary regions enclosed by closed curves in the plane (e.g., see \cite[Lemma~4]{lee2017separators}).
Thus, our result is very general, encompassing arbitrary line segments and
triangles in $\R^2$ for the first time, and also including all the previous types of geometric
objects considered by Conroy and T\'oth, such as disks, rectangles, and fat convex objects in $\R^2$---our result shows that fatness is not needed in $\R^2$.

We obtain a similar result also for \emph{higher}-dimensional fat objects:

\begin{itemize}
\item $O_k(1)$-hop spanners with $O_{d,k}(n\alpha_k(n))$ size  for intersection graphs
of fat objects in $\R^d$.
\end{itemize}

\noindent
In particular, this includes the case of arbitrary balls in $\R^d$, for which 
there were no prior results for $d\ge 3$.  (In $d=2$ dimensions, compared to Conroy and T\'oth's previous result on fat objects, we work with a different definition of fatness that does not require convexity.)

\subparagraph*{More new results.}
The above new results improve previous size bounds for sufficiently large hop stretch, but do not necessarily improve Conroy and T\'oth's 
2-hop and 3-hop spanners.  We have additional new constructions that
directly improve some of their specific results.  Notably, we obtain:

\begin{itemize}
\item 2-hop spanners with $O(n\log n)$ size for objects with linear union complexity in the plane.
\end{itemize}

\noindent
Classes of objects with linear union complexity include
arbitrary disks, pseudodisks, and fat rectangles in $\R^2$.
Thus, our result strictly improves the hop stretch in Conroy and T\'oth's 3-hop, $O(n\log n)$-size spanners for the case of disks,
and also generalizes their 2-hop, $O(n\log n)$-size spanners for the case of fat rectangles. 
So, our result significantly enlarges the class of geometric intersection graphs that admit sparse 2-hop spanners.

In addition, we obtain:

\begin{itemize}
\item 3-hop spanners with $O(n\log n)$ size for rectangles in the plane.
\end{itemize}

\noindent
This is a logarithmic-factor improvement over Conroy and T\'oth's previous result.

A summary of  our results can be found in Table~\ref{tbl:contribution}.

\subparagraph*{Techniques.}
Our proofs use interesting techniques.
For string graphs, our approach (see Section~\ref{sec:string}) is based on
divide-and-conquer via \emph{graph separators}.
Separator theorems for string graphs have been developed in a series of papers~\cite{fox2010separator,matousek14,lee2017separators},
but sublinear-size separators exist only if the string graph is not too dense.
On the other hand, if the graph is dense, there exist high-degree vertices,
whose neighborhoods form large stars.  The key is to realize that each such star can be viewed 
as a single object, since a connected union of strings can be regarded
as a new string.  Besides recursion in the parts produced by the separator,
we use an extra recursive call to handle these stars.  (Luckily, the separator bound 
for string graphs is not influenced by how complicated the strings are.)
This double recursion eventually leads to inverse Ackermann complexity.  The overall construction is simple.

Curiously, even if we are only interested in the very special case of vertical/horizontal
line segments (or rectangles), it is still important to generalize to strings with
the above approach.  (In fact, we started this research more modestly with
the case of vertical/horizontal segments, using more traditional 
divide-and-conquer, but the above string-graph separator approach
wins out at the end.)

For fat objects in $\R^d$, the approach is similar, except that we use
\emph{shifted quadtrees} \cite{Bern93,Chan98,chan2003polynomial} and tree partitioning \cite{frederickson1997ambivalent} to do divide-and-conquer (see Section~\ref{sec:fat}).
The key is to view a union of fat objects containing a common point as 
a new fat object.  Again, we get a double recursion leading to inverse
Ackermann.

For our 2-hop spanners for objects with linear union complexity in $\R^2$,
we use logarithmically many layers of \emph{shallow cuttings} (see Section~\ref{sec:union}).
Shallow cuttings~\cite{matousek1992reporting}
%,chekuri2012set} 
have many applications, for example, to
static data structures for halfspace range searching~\cite{Chan00} and orthogonal  range searching~\cite{AfshaniT18},
dynamic geometric data structures~\cite{Chan20a}, levels in arrangements~\cite{Chan00}, incidences~\cite{ChanH23},  epsilon-nets~\cite{matousek1992reporting}, and
geometric set cover~\cite{chekuri2012set}.  Interestingly, our work adds
one more (unexpected) application to the list.  Given the shallow cutting lemma,
our proof is simple, this time, not even needing
recursion; in fact, it is simpler than 
Conroy and T\'oth's previous proofs for their 2-hop spanners for fat rectangles,
as well as their 3-hop spanners
for fat convex objects in $\R^2$.

Our 3-hop spanners for rectangles in $\R^2$ (see Section~\ref{sec:rect}) is perhaps the least exciting.
It is similar to Conroy and T\'oth's previous proof, using  straightforward
range-tree-style divide-and-conquer, but exploiting known spanners in 
one dimension (namely, points and intervals on the real line) as a base case.

\begin{table*}[t] %[bh]
\label{tab:results}
\centering
\begin{tabular}{|l|l|l|}
\hline
     & hop stretch & size
\\\hline
\multirow{2}{*}{String graphs}
    & 3 & $O(n\log^3 n)$
\\\cline{2-3}
    & $O_k(1)$ & $O(n\alpha_k(n))$
\\\hline
\multirow{2}{*}{Fat objects in $\R^d$}
    & 3 & $O(n\log n)$
% \\\cline{2-3}
%     & 6 & $O(n\log\log n)$
\\\cline{2-3}
    & $O_k(1)$ & $O_{d,k}(n\alpha_k(n))$
\\\hline
    Objects with union complexity $\UU(\cdot)$ & 2 & $O(\UU(n)\log n)$
\\\hline
    Rectangles in $\R^2$ & 3 & $O(n\log n)$
\\\hline
\end{tabular}
\vspace{1ex}
\caption{Our results on $O(1)$-hop spanners for different classes of geometric intersection graphs.}
\label{tbl:contribution}
\end{table*}

\section{String Graphs}\label{sec:string}

Our spanner constructions for string graphs will use a separator theorem by Lee \cite{lee2017separators} (which was an improvement over previous versions by
Fox and Pach~\cite{fox2010separator} and
Matou\v sek~\cite{matousek1992reporting}).

\begin{lemma}[String-graph separator~\cite{lee2017separators}]
    \label{thm:string-graph-separator}
        For every string graph $G$ with $n$ vertices and $m$ edges, there
        exists a partition of $V(G)$ into subsets $V_1,V_2,X$ with
        $|V_1|,|V_2|\le 2n/3$, $|X|=O(\sqrt{m})$, 
        such that there are no edges between $V_1$ and $V_2$.
\end{lemma}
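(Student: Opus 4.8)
The plan is to reduce to the classical planar separator theorem by regarding a string graph as a \emph{region intersection graph} over a planar graph. First I would fix a string representation of $G$ and perturb the curves into general position; let $G_0$ be the resulting planar \emph{arrangement graph}, whose vertices are the crossing points together with a few extra subdivision points so that the trace of each curve is connected. Each string $v$ then corresponds to a connected vertex set $R_v\subseteq V(G_0)$ with $uv\in E(G)$ iff $R_u\cap R_v\neq\emptyset$, and $G_0$ is planar no matter how complicated the strings are. Writing $\mu(p)=|\{v:p\in R_v\}|$ for the multiplicity of an arrangement vertex, a double-counting argument gives $m\le\sum_p\binom{\mu(p)}{2}$; the catch is that the raw arrangement may have astronomically many crossings, so $|V(G_0)|$ is far too large to feed directly into a planar separator. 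The crucial first step is therefore a \emph{reduction}: contract the arrangement --- merging curve segments that carry no ``useful'' (edge-witnessing) crossing --- so that $G_0$ shrinks to $|V(G_0)|=O(m)$ vertices while every $R_v$ stays connected, the adjacency pattern of $G$ is exactly preserved, and the multiplicities stay controlled in the sense that $\sum_p\mu(p)^2=O(m)$.

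Given such a bounded representation, I would exhibit $G$ as a minor of a planar graph $H$ with only $O(m)$ vertices: blow up each arrangement vertex $p$ into a planar ``splitter gadget'' of size $O(\mu(p)^2)$ --- e.g.\ a $\mu(p)\times\mu(p)$ grid with the $G_0$-edges at $p$ attached around its outer face --- arranged so that the $\mu(p)$ strings through $p$ receive pairwise vertex-disjoint connected slots while the gadget still transmits every adjacency witnessed at $p$. Setting $B_v=\bigcup_{p\in R_v}(\text{$v$'s slot in $p$'s gadget})$ produces pairwise disjoint connected branch sets realizing $G$ as a minor of $H$, with $|V(H)|=\sum_pO(\mu(p)^2)=O(m)$. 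Applying the weighted planar separator theorem to $H$, with weight $1$ on each branch set, yields a vertex set $X_H$ of size $O(\sqrt{|V(H)|})=O(\sqrt m)$ whose removal leaves every component with branch-set-weight at most $2n/3$; pulling back, $X=\{v:B_v\cap X_H\neq\emptyset\}$ is a separator of $G$ with $|X|\le|X_H|=O(\sqrt m)$ and each side of size at most $2n/3$, as required (a routine final re-balancing handles the constants).

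The main obstacle is the reduction step, together with the gadget design it feeds. One must contract the arrangement aggressively enough to destroy the possibly exponential crossing count, yet gently enough that each $R_v$ remains connected, the exact adjacencies of $G$ survive, and --- most delicately --- the merged vertices do not accumulate so much multiplicity that $\sum_p\mu(p)^2$ exceeds $O(m)$; balancing these is exactly where the earlier bounds of Fox--Pach ($O(m^{3/4}\sqrt{\log m})$) and Matou\v{s}ek ($O(\sqrt{m\log m})$) fell short of tight, and where Lee's argument --- which can alternatively be phrased through the $O(1)$ vertex flow--cut gap of planar (indeed excluded-minor) graphs, applied to a uniform multicommodity flow routed in $G$ and projected down to $G_0$ --- removes the last logarithmic factors to reach $O(\sqrt m)$. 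Everything downstream of the reduction, namely invoking the planar separator theorem and pulling the separator back through the branch sets, is standard.
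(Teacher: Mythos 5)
The paper does not prove this lemma at all; it is cited verbatim from Lee's separator paper, so there is no in-paper argument to compare your sketch against. Evaluating your proposal on its own terms, however, it contains a fatal structural flaw. You propose to ``exhibit $G$ as a minor of a planar graph $H$,'' realizing each string $v$ as a connected branch set $B_v$ in a planar blow-up of the arrangement graph, with branch sets pairwise disjoint and an $H$-edge between $B_u$ and $B_v$ whenever $uv\in E(G)$. That is precisely the definition of $G$ being a minor of $H$, and planarity is minor-closed, so the construction would force every string graph to be planar. But $K_n$ is a string graph for every $n$ (take $n$ pairwise-crossing segments), so no such planar $H$ can exist whenever $G$ contains a $K_5$ minor. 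The ``splitter gadget'' already fails locally for the same reason: if $\mu(p)\ge 5$ mutually adjacent strings pass through $p$, a gadget that gives them pairwise disjoint connected slots with an $H$-edge between every pair of slots would contain a $K_5$ minor inside a planar piece.

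What the actual proofs (Fox--Pach, Matou\v{s}ek, Lee) do is fundamentally different: they never try to embed $G$ itself into a planar host. They work directly with the planar arrangement graph $G_0$ and the family of connected traces $\{R_v\}$ --- that is, with $G$ as a \emph{region intersection graph} over $G_0$, a graph class that is much larger than the planar graphs --- and find a separator of $G_0$ (a cycle separator, or a random padded partition / conformal-metric argument in Lee's case) that simultaneously keeps both sides balanced in the number of traces they meet and cuts through only $O(\sqrt m)$ of the traces; the strings whose traces are cut become $X$. This is also why your secondary worry about controlling $\sum_p\mu(p)^2=O(m)$ is moot in that framework: Lee's argument never needs the host to absorb the intersection pattern, only to carry the traces. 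If you want to salvage a proof along elementary lines, you should abandon the minor/gadget viewpoint and instead aim to find a balanced cycle separator of $G_0$ that is crossed by few traces; that recovers the weaker Fox--Pach/Matou\v{s}ek bounds, while reaching the clean $O(\sqrt m)$ genuinely requires Lee's more delicate machinery.
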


We first warm up by describing a 3-hop spanner with
$O(n\log^3n)$ size, and a 7-hop spanner with
$O(n\log\log n)$ size, before generalizing it to a
larger-hop spanner with inverse Ackermann complexity.

\subsection{3-Hop Spanner with $O(n\log^3n)$ Size}

\begin{center}
    {\bf String Graph Construction I}
\end{center}
\begin{enumerate}
    \item Repeatedly pick a vertex with degree larger than $\Delta$, for a parameter $\Delta$ to be chosen later, and remove the vertex along with its neighborhood, until there are no vertices with at most degree $\Delta$ in the remaining graph.  Let $G'$ be the remaining graph. A vertex and its neighborhood forms a star, and at most $n/\Delta$ such stars are removed. Add the edges of these stars ($O(n)$  in total) to the output spanner $\widehat{G}$.

    \item For each vertex $u\in V(G)$ and for each star removed in step 1 that contains a vertex adjacent to~$u$, add an edge between $u$ and an arbitrary such vertex in the star to $\widehat{G}$. At most $O(n\cdot n/\Delta)$ edges are added this way.

    \item Apply Lemma~\ref{thm:string-graph-separator} to $G'$ to obtain $V_1,V_2,X$.
    Since $G'$ has $O(\Delta n)$ edges, $|X|=O(\sqrt{\Delta n})$. Recursively construct a 3-hop spanner for the subgraph induced by $V_1\cup X$ and for the subgraph induced by $V_2\cup X$.  Add all their edges to $\widehat{G}$.
\end{enumerate}

\subparagraph*{Hop stretch.} For any edge $uv$, if both $u$ and $v$ belong to $G'$, then $u$ and $v$ are connected by 3 hops by induction.  Otherwise, one of its vertices, say, $v$, belongs to a star removed in step~1.  By step 2, the spanner $\widehat{G}$ connects $u$ to some vertex $v'$ in the same star as $v$. The star connects $v'$ and $v$ by 2 hops, so $u$ and $v$ are connected by 3 hops (see Figure \ref{fig:string-graph-i}).

\subparagraph*{Sparsity.} 
The size $S(n)$ of the spanner follows the recurrence
\begin{equation*}
    S(n)\ \leq\ \max_{n_1,n_2\le 2n/3:\ n_1+n_2\le n} (S(n_1 + O(\sqrt{\Delta n})) + S(n_2 + O(\sqrt{\Delta n})) + O(n\cdot n/\Delta)).
\end{equation*}
By setting $\Delta =n/\log^2 n$, the recurrence solves to $S(n)=O(n\log^3 n)$.

\begin{figure}
    \centering
    \includegraphics[page=2,scale=1.1]{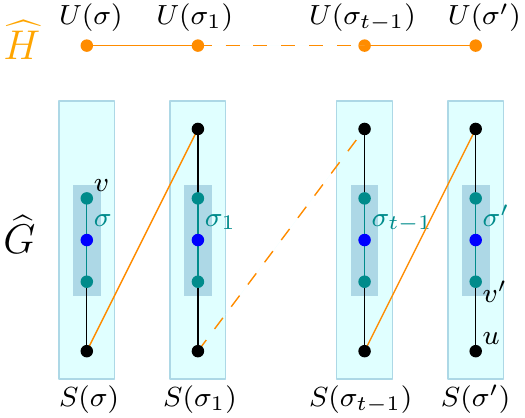}
    \caption{The light blue circle represents the star containing $v$. The blue dot at the center represents the center of the star.}
    \label{fig:string-graph-i}
\end{figure}

\begin{theorem}\label{thm:string-graph:3hop}
    Every string graph with $n$ vertices admits a $3$-hop spanner of size $O(n\log^3 n)$.
\end{theorem}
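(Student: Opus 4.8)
The plan is to follow exactly the construction already laid out in the excerpt (``String Graph Construction I'') and verify that it yields both the stated hop stretch and the stated size bound; the theorem statement is really just the formal assertion that this construction works. So the proof will consist of two parts: a correctness argument for the 3-hop property, and a recurrence analysis for the size.

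\medskip

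\noindent\textbf{Setup and construction.} First I would restate the three-step construction. Fix the parameter $\Delta$ (to be chosen as $n/\log^2 n$ at the end). Step 1 greedily extracts stars centered at high-degree (degree $>\Delta$) vertices; since each removed star consumes more than $\Delta$ vertices, at most $n/\Delta$ stars are removed, contributing $O(n)$ edges total. Step 2 adds, for every vertex $u$ and every removed star $Y$ meeting $N(u)$, one edge from $u$ into $Y$; this is at most $O(n\cdot n/\Delta)$ edges. Step 3 applies the separator theorem (Lemma~\ref{thm:string-graph-separator}) to the residual graph $G'$, which has maximum degree $\le\Delta$ and hence $m'=O(\Delta n)$ edges, so $|X|=O(\sqrt{\Delta n})$; then recurse on $G'[V_1\cup X]$ and $G'[V_2\cup X]$. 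A small point worth stating: the recursive subgraphs are again string graphs (induced subgraphs of string graphs are string graphs), so the induction is well-founded, with the base case being graphs of constant size.

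\medskip

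\noindent\textbf{Hop stretch.} This is the argument sketched in the excerpt, which I would write out cleanly as an induction on $n$. Take any edge $uv\in E(G)$. Case (a): both $u,v\in V(G')$. Then both are retained in the separator decomposition and in fact both lie in a common recursive subgraph --- because $uv\in E(G')$ forces $u,v$ not separated, so $\{u,v\}\subseteq V_1\cup X$ or $\{u,v\}\subseteq V_2\cup X$; by the inductive hypothesis that subgraph's recursively-built spanner connects them in $\le 3$ hops, and those edges are in $\widehat G$. Case (b): at least one endpoint, say $v$, was removed in step 1 as part of a star $Y$ with center $c$. By step 2, since $u$ is adjacent to $v\in Y$, $\widehat G$ contains an edge $uv'$ for some $v'\in Y$; and the star edges $v'c$, $cv$ are in $\widehat G$ by step 1, giving the path $u\to v'\to c\to v$ of length $3$ (if $v'=v$ or $v'=c$ the path is shorter). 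Hence every edge of $G$ is spanned within $3$ hops.

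\medskip

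\noindent\textbf{Size recurrence.} The edges contributed directly in steps 1 and 2 total $O(n) + O(n^2/\Delta) = O(n^2/\Delta)$. Writing $S(n)$ for the worst-case spanner size on any $n$-vertex string graph, the recursion in step 3 gives
\[
S(n)\ \le\ \max_{\substack{n_1,n_2\le 2n/3\\ n_1+n_2\le n}}\ \Bigl(S\bigl(n_1+c\sqrt{\Delta n}\bigr)+S\bigl(n_2+c\sqrt{\Delta n}\bigr)\Bigr)\ +\ O(n^2/\Delta),
\]
for a suitable constant $c$. With $\Delta = n/\log^2 n$ we have $\sqrt{\Delta n} = n/\log n$, so each recursive instance has size at most $\max(n_1,n_2)+O(n/\log n)\le 2n/3 + O(n/\log n)$, which is still at most, say, $3n/4$ for $n$ large; meanwhile the additive term is $O(n^2/\Delta)=O(n\log^2 n)$. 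A standard induction --- guessing $S(n)\le An\log^3 n$ and checking that $A(n_1+O(n/\log n))\log^3 n_1 + A(n_2+O(n/\log n))\log^3 n_2 + O(n\log^2 n)\le An\log^3 n$ using $n_1+n_2\le n$, $n_i\le 3n/4$, and $\log^3(3n/4)\le \log^3 n - \Omega(\log^2 n)$ --- absorbs the $O(n\log^2 n)$ term and closes the bound. The one mildly delicate point, and the part I would be most careful about, is bounding how the additive ``$+O(\sqrt{\Delta n})$'' blow-up in the subproblem sizes accumulates down the recursion: the recursion depth is $O(\log n)$, and at each level the total ``extra'' vertices injected is $O(n/\log n)$ times the number of subproblems at that level, whose sizes sum to $O(n)$, so the extra mass per level is $O(n/\log n)$ and over $O(\log n)$ levels contributes only $O(n)$ overall --- negligible against $n\log^3 n$. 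Making that accounting precise (e.g.\ by tracking total vertex count level-by-level, or by strengthening the induction hypothesis to $S(n)\le An\log^3 n$ for all $n\ge$ some constant and verifying the inequality directly) is the only real work; everything else is bookkeeping. Choosing $\Delta=n/\log^2 n$ therefore yields $S(n)=O(n\log^3 n)$, completing the proof.
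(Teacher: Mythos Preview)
Your proposal is correct and follows essentially the same approach as the paper: the construction, the two-case hop-stretch argument, and the recurrence with $\Delta=n/\log^2 n$ are identical to the paper's treatment. You supply more detail on solving the recurrence than the paper does (it simply asserts the $O(n\log^3 n)$ solution), and your extra care about the accumulated $O(\sqrt{\Delta n})$ blow-up is reasonable, if slightly informal.
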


\subsection{7-Hop Spanner with $O(n\log\log n)$ Size}

To obtain hop spanners with still smaller size,
we need a generalized version of Theorem \ref{thm:string-graph-separator} that partitions 
into multiple subsets (analogous to ``$r$-divisions'' in planar graphs~\cite{frederickson1987fast}):

\newcommand{\VV}{\overline{V}}

\begin{lemma}[String-graph separator: multiple-subsets version]
    \label{lem:string-graph-r-division}
    Given parameters $r$ and $\Delta$ with $\Delta=o(r)$, for every string graph $G$ with $n$ vertices and maximum degree $\Delta$, there exist $O(n/r)$ subsets $\VV_i\subset V(G)$ of at most $r$ vertices each, such that
    $E(G)\subset \bigcup_i (\VV_i\times\VV_i)$, and the 
    \emph{boundary complexity}, defined as $\sum_i |\VV_i| - |V(G)|$,  
     is at most $O(\sqrt{\Delta}n/\sqrt{r})$.   
\end{lemma}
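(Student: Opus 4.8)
The plan is to build an ``$r$-division'' of $G$ by recursively applying the single-separator theorem (Lemma~\ref{thm:string-graph-separator}): given a current piece on a vertex set $P$ with $|P|>r$, apply the lemma to $G[P]$ to obtain a partition $P=V_1\cup V_2\cup X$, and recurse on $G[V_1\cup X]$ and $G[V_2\cup X]$, placing the separator $X$ into \emph{both} subproblems; a piece is declared final (one of the $\VV_i$) as soon as it has at most $r$ vertices. Three structural facts drive the analysis. First, since Lemma~\ref{thm:string-graph-separator} destroys all edges between $V_1$ and $V_2$, every edge of $G[P]$ lies inside $V_1\cup X$ or inside $V_2\cup X$; inducting down the recursion, every edge of $G$ lands inside some final piece, so $E(G)\subset\bigcup_i(\VV_i\times\VV_i)$. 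Second, an induced subgraph of a string graph is again a string graph of maximum degree at most $\Delta$, so a piece with $p$ vertices has at most $\Delta p/2$ edges and its separator has size $|X|=O(\sqrt{\Delta p})$. Third --- and this is the key use of the hypothesis $\Delta=o(r)$ --- every piece that gets split has $p>r$, hence $|X|=O(\sqrt{\Delta p})=O(\sqrt{\Delta/r}\cdot p)=o(p)$; combined with $|V_1|,|V_2|\le 2p/3$ this forces $|V_1|,|V_2|\ge p/3-o(p)$, so both children have between (roughly) $p/4$ and $0.7p$ vertices. In particular the recursion terminates, has depth $O(\log(n/r))$, and --- since duplicating $X$ inflates the children's total size only from $p$ to $p+O(\sqrt{\Delta p})$ --- the piece sizes still shrink geometrically.

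For the quantitative bounds I would track the potential $\Phi(n):=\sum_v\sqrt{p_v}$, the sum of square roots of sizes over the \emph{internal} nodes $v$ of the recursion tree. It obeys $\Phi(n)=0$ for $n\le r$, and for $n>r$, $\Phi(n)\le\sqrt n+\Phi(p_1)+\Phi(p_2)$ where the child sizes satisfy $p_1+p_2=n+|X|$ with $|X|=O(\sqrt{\Delta n})$ and $p_1,p_2\in[n/4,\,0.7n]$. I claim $\Phi(n)\le\max(0,\ An/\sqrt r-B\sqrt n)$ for suitable absolute constants $A\gg B>0$. The induction closes because the two recursive terms contribute at most $A(p_1+p_2)/\sqrt r-B(\sqrt{p_1}+\sqrt{p_2})$, where $A(p_1+p_2)/\sqrt r=An/\sqrt r+O(A\sqrt{\Delta/r})\sqrt n$ is $An/\sqrt r$ plus a negligible multiple of $\sqrt n$ (by $\Delta=o(r)$), and $B(\sqrt{p_1}+\sqrt{p_2})\ge(1+c_0)B\sqrt n$ for an absolute constant $c_0>0$ (by concavity of $\sqrt\cdot$ together with $p_1,p_2\le 0.7n$); the leftover ``$+\sqrt n$'' is then absorbed into the extra gain $c_0B\sqrt n$ once $B$ is a large enough constant and $r$ is large enough compared to $\Delta$. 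Hence $\Phi(n)=O(n/\sqrt r)$.

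The lemma now follows. Since every internal node has $p_v>r$, i.e.\ $\sqrt{p_v}>\sqrt r$, the number of internal nodes is at most $\Phi(n)/\sqrt r=O(n/r)$; the recursion tree (a full binary tree) therefore has $O(n/r)$ leaves, i.e.\ $O(n/r)$ final pieces $\VV_i$, each of size at most $r$ by the stopping rule. Finally, each split adds exactly $|X_v|$ to the running total of piece sizes at the current frontier, so $\sum_i|\VV_i|=n+\sum_{\text{internal }v}|X_v|$, whence the boundary complexity equals $\sum_{\text{internal }v}|X_v|=O(\sqrt\Delta)\sum_v\sqrt{p_v}=O(\sqrt\Delta\,\Phi(n))=O(\sqrt\Delta\,n/\sqrt r)$.

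I expect the main obstacle to be exactly the point flagged above: a naive level-by-level summation of separator sizes loses a spurious $\log(n/r)$ factor, and removing it requires both the subtractive term $-B\sqrt n$ in the potential $\Phi$ and a \emph{quantitative} use of $\Delta=o(r)$ --- namely that separators of pieces of size $>r$ are only an $O(\sqrt{\Delta/r})$ fraction of the piece, so that duplicating them into the two children neither destroys the geometric decrease of piece sizes nor accumulates more than $O(\sqrt\Delta\,n/\sqrt r)$ boundary in total.
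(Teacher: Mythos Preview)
Your proposal is correct and follows essentially the same approach as the paper: recursively apply Lemma~\ref{thm:string-graph-separator}, place the separator $X$ into both children, and stop once a piece has at most $r$ vertices. The paper's proof is much terser --- it simply writes down the recurrence $B(n)\le B(n_1+O(\sqrt{\Delta n}))+B(n_2+O(\sqrt{\Delta n}))+O(\sqrt{\Delta n})$ (with $n_1,n_2\le 2n/3$, $n_1+n_2\le n$, $B(n)=0$ for $n<r$) and asserts it solves to $O(\sqrt{\Delta}\,n/\sqrt{r})$ --- whereas your potential-function argument with the subtractive $-B\sqrt n$ term is exactly how one would justify that assertion rigorously.
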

\begin{proof}
Apply Lemma \ref{thm:string-graph-separator} to obtain $V_1,V_2,X$, with $|X|=O(\sqrt{\Delta n})$, and
recursively generate subsets for the subgraph induced by $V_1\cup X$ and for
the subgraph induced by $V_2\cup X$.
When a subgraph has fewer than $r$ vertices, output its vertex set.

    Let $B(n)$ count %with multiplicity 
%    the maximum number of boundary vertices 
    the boundary complexity of the subsets produced by the above division procedure on an $n$-vertex string graph. We have the recurrence
    \begin{equation*}
        B(n) \leq \left\{
        \begin{array}{ll}\displaystyle\max_{n_1,n_2\le 2n/3:\ n_1+n_2\le n}
        (B(n_1+O(\sqrt{\Delta n})) + B(n_2+O(\sqrt{\Delta n})) + O(\sqrt{\Delta n})) & \mbox{if $n\ge r$}\\
        0 & \mbox{if $n<r$}
        \end{array}\right.
    \end{equation*}
    %where $c$ is some constant, $n_1,n_2=\Theta(n)$, and $n_1+n_2+n/\sqrt{s}=n$. 
    The recurrence solves to $B(n)=O(\sqrt{\Delta} n/\sqrt{r})$.
\end{proof}

\begin{center}
    {\bf String Graph Construction II}
\end{center}
\begin{enumerate}
    \item Follow step 1 of Construction I.
    %Repeatedly remove vertices with degree larger than $n/s$ along with their neighborhood, for a parameter $s$ to be chosen later. There are at most $s$ such stars. Let $G'$ be the remaining part of $G$.
    \item For each vertex $u\in V(G)$ that is adjacent to a vertex of at least one star, add an edge between $u$ and an arbitrary such vertex in such a star to $\widehat{G}$. At most $O(n)$ edges are added this way.
    \item For each pair of stars removed in step 1 such that there is a 2-hop path between them, add an arbitrary such 2-hop path to $\widehat{G}$.
%    that are joined by an edge in $G$, add an arbitrary edge between the two stars. 
    At most $O((n/\Delta)^2)$ edges are added this way.
    \item Apply Lemma~\ref{lem:string-graph-r-division} to $G'$ to obtain the subsets $\VV_i$.
    Recursively construct a 7-hop spanner for the subgraph induced by each $\VV_i$.  Add all their edges to $\widehat{G}$.
\end{enumerate}

\subparagraph*{Hop stretch.} For any edge $uv$, %there are several cases to consider. 
if both $u$ and $v$ belong to $G'$, then $u$ and $v$ are connected by 7 hops by induction.  Otherwise,
one of its vertices, say, $v$ belongs to a star removed in step~1.  By step 2, the spanner $\widehat{G}$ connects $u$ to some vertex $v'$ in a possibly different star.  By step~3, these two stars are connected by 2 hops in $\widehat{G}$.  Inside a star, 2 hops suffice.
Thus, $u$ and $v$ are connected by 7 hops (see Figure~\ref{fig:string-graph-ii}).

\begin{figure}
    \centering
    \includegraphics[page=3,scale=1]{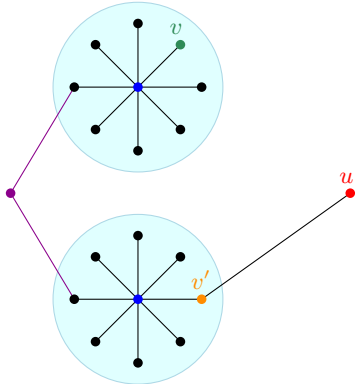}
    \caption{If $uv\in E(G)$, then $\widehat{G}$ connects $u$ and $v$ by 7 hops. The purple paths represent the arbitrary 2-hop path added in step~3.}
    \label{fig:string-graph-ii}
\end{figure}

\subparagraph*{Sparsity.} The size of the spanner follows the recurrence
\begin{equation*}
    S(n) \ \leq \max_{n_1,n_2,\ldots\le r:\ \sum_i 
 n_i\le n + O(\sqrt{\Delta}n/\sqrt{r})} \left(\sum_i S(n_i) + O(n+(n/\Delta)^2)\right).
\end{equation*}
%for some constant $c$, where $n_i=O(r)$, $\sum_{i=1}^{cn/r} n_i \leq n$, and $\sum_{i=1}^{cn/r} p_i = O(n/r^\varepsilon)$ for some $\varepsilon$ such that $s=n^{2\varepsilon}$. 
By choosing $\Delta=\sqrt{n}$ and $r=n^{0.9}$, the recurrence solves to $S(n) = O(n\log\log n)$.

\begin{theorem}
    \label{thm:string-graph-basic}
    Every string graph with $n$ vertices admits a $7$-hop spanner of size $O(n\log\log n)$.
\end{theorem}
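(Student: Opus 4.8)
The plan is to refine String Graph Construction~I in two ways: replace its single application of the basic separator (Lemma~\ref{thm:string-graph-separator}) by the multiple-subsets version (Lemma~\ref{lem:string-graph-r-division}), and, crucially, add one extra layer of edges that directly interconnects the stars removed by the high-degree peeling. Concretely, I would first peel off high-degree vertices exactly as in step~1 of Construction~I, obtaining at most $n/\Delta$ stars---whose own edges cost only $O(n)$---together with a residual graph $G'$ of maximum degree at most~$\Delta$. Then, rather than paying the $O(n\cdot n/\Delta)$ of Construction~I, I would have each vertex $u$ that touches some star keep just a single edge to one star-vertex adjacent to it, an $O(n)$ cost in total. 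To compensate for this cheaper linking, I would then add, for every pair of removed stars joined by a $2$-hop path in $G$, one such $2$-hop path to $\widehat{G}$; there are only $O((n/\Delta)^2)$ pairs. Finally I would apply Lemma~\ref{lem:string-graph-r-division} to $G'$ with parameters $r,\Delta$ and recurse on each of the $O(n/r)$ induced pieces $\VV_i$.

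For the hop bound, take an edge $uv$. If $uv\in E(G')$, then Lemma~\ref{lem:string-graph-r-division} puts both endpoints in a common $\VV_i$, and induction gives $7$ hops. Otherwise some endpoint, say $v$, lies in a removed star $S_v$ with center $c_v$. The linking step puts in $\widehat{G}$ an edge from $u$ to some vertex $v'$ lying in a (possibly different) star $S_{v'}$; since $v',u,v$ is a $2$-hop path between $S_{v'}$ and $S_v$, the interconnection step guarantees $\widehat{G}$ contains a $2$-hop path between those two stars. Concatenating $u\to v'$ ($1$ hop), the route $v'\to c_{v'}\to(\text{endpoint of the inter-star path in }S_{v'})$ ($2$ hops), the inter-star path itself ($2$ hops), and $(\text{endpoint in }S_v)\to c_v\to v$ ($2$ hops) gives a walk of at most $7$ edges; the degenerate cases $v'=v$ or $S_{v'}=S_v$ only shorten it.

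For the size bound, writing $S(n)$ for the worst-case spanner size, the construction yields
\[
S(n)\ \le\ \max_{\sum_i n_i\,\le\, n+O(\sqrt{\Delta}\,n/\sqrt{r}),\ n_i\le r}\Bigl(\textstyle\sum_i S(n_i)\ +\ O\bigl(n+(n/\Delta)^2\bigr)\Bigr),
\]
with $\Delta$ reset to $\sqrt{n_i}$ in each recursive call. The choice $\Delta=\sqrt{n}$ collapses the extra term to $O(n)$, and the choice $r=n^{0.9}$ makes the boundary blow-up $O(\sqrt{\Delta}\,n/\sqrt{r})=O(n^{0.8})=o(n)$, so the aggregate instance size stays $(1+o(1))n$ all the way down. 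Since each recursive piece has at most $n^{0.9}$ vertices, $\log(\text{size})$ shrinks by a constant factor per level, so the recursion has depth $O(\log\log n)$; charging the $O(n)$ additive cost to each of these $O(\log\log n)$ levels gives $S(n)=O(n\log\log n)$.

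I expect the main obstacle to be exactly this parameter balancing, which is squeezed from both directions: the number of star pairs in the interconnection step is $\Theta((n/\Delta)^2)$, forcing $\Delta$ to be at least roughly $\sqrt{n}$, while the boundary complexity $\sqrt{\Delta}\,n/\sqrt{r}$ then demands that $r$ be a sufficiently large polynomial in $n$ for the blow-up to be $o(n)$ and decay geometrically across levels---yet $r$ must remain polynomially smaller than $n$ so that the $n\mapsto n^{0.9}$-type shrinkage produces only $O(\log\log n)$ levels rather than $O(\log n)$. Exhibiting a regime ($\Delta=\sqrt{n}$, $r=n^{0.9}$) that meets all three constraints at once, and then verifying that the additive costs genuinely sum to $O(n\log\log n)$, is the crux; the rest is a routine adaptation of Construction~I together with the new star-interconnection layer and Lemma~\ref{lem:string-graph-r-division}.
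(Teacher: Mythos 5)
Your proposal is correct and matches the paper's proof of Theorem~\ref{thm:string-graph-basic} essentially step for step: the same high-degree peeling, the same single linking edge per vertex, the same $O((n/\Delta)^2)$ inter-star 2-hop paths, the same application of Lemma~\ref{lem:string-graph-r-division}, the same $1+2+2+2=7$ hop count, and the same parameter choices $\Delta=\sqrt{n}$, $r=n^{0.9}$ solving the recurrence to $O(n\log\log n)$.
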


\subsection{$O_k(1)$-Hop Spanner with $O(n\alpha_k(n))$ Size}

Let $t_1=3$ and $t_k=5t_{k-1}+3$ for all $k>1$.
This implies $t_k=\frac{3}{4}(5^k-1)$. We next modify the preceding construction to obtain a $t_k$-hop spanner. The key is the following observation:

\begin{observation}\label{obs:string}
The union $C$ of strings that form a connected string graph may be viewed as a string.
\end{observation}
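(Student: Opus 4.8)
The plan is to exhibit an explicit single curve whose image is exactly the union of the given strings, and then to observe that for the purpose of intersecting other strings only this image matters. Let $c_1,\dots,c_j$ be the strings whose intersection graph $H$ is connected, and fix a spanning tree $T$ of $H$ rooted at $c_1$. For each tree edge $\{c_a,c_b\}$ there is, by definition of an intersection graph, a witness point $p_{ab}\in c_a\cap c_b$.

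First I would build $C$ by an Euler-tour-style traversal of $T$. Starting at an endpoint of $c_1$, walk along $c_1$; whenever the walk first reaches the witness point $p_{1i}$ of a child $c_i$, pause the traversal of $c_1$, recursively traverse the entire subtree rooted at $c_i$, beginning and ending at $p_{1i}$, and then resume walking along $c_1$ until its other endpoint. Since $T$ has finitely many nodes, this produces a finite concatenation of sub-arcs of the $c_i$'s; reparametrizing the concatenation by $[0,1]$ gives a continuous map $C:[0,1]\to\R^2$, i.e.\ a string. Note that $C$ may retrace portions of the $c_i$'s and may pass through branch points several times; this is harmless, since strings are allowed to be arbitrarily complicated and self-intersecting. (If the original strings are polygonal, then so is $C$.)

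Next I would verify the two properties we need. Every $c_i$ is traversed in full at least once during the tour, and the tour never leaves $\bigcup_i c_i$, so the image of $C$ equals $\bigcup_i c_i$. Consequently, for any other string $s\notin\{c_1,\dots,c_j\}$, we have $s\cap C\neq\emptyset$ if and only if $s$ intersects some $c_i$. Hence replacing the vertices $c_1,\dots,c_j$ by the single vertex $C$ yields a string graph in which $C$ is adjacent to exactly those strings adjacent to at least one of $c_1,\dots,c_j$, which is precisely what ``viewing the union as a string'' means.

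I do not expect a real obstacle: the only care needed is the bookkeeping in the recursive traversal (continuity at the detour points, and that each curve is covered entirely), which is routine. The one conceptual point worth stressing is that the argument genuinely exploits the permissiveness of the string-graph model---a curve may double back on itself---so the same trick would fail for ``simple'' objects such as disks or convex sets, which is why the fat-object case in Section~\ref{sec:fat} will require a different merging operation.
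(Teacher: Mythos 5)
Your proof is correct and takes essentially the same Euler-traversal approach as the paper. The one substantive difference is how each handles the issue of whether the resulting curve must be a \emph{simple} arc: the paper first eliminates cycles in $\bigcup_i c_i$ by deleting infinitesimally small arcs and then takes an Euler traversal of the resulting topological tree so as to obtain a \emph{noncrossing} path, whereas your curve traces the union exactly but is allowed to retrace and self-cross. You justify this shortcut by observing that strings may be self-intersecting, and the paper itself endorses this viewpoint in its parenthetical remark that the observation is immediate if one defines a string as any connected set in the plane, as Fox and Pach do~\cite{fox2010separator}. The shortcut is legitimate for the intended use: what matters downstream is only that the separator theorem (Lemma~\ref{thm:string-graph-separator}) applies to the resulting intersection graph, and the class of abstract graphs realizable by self-intersecting curves coincides with the class of string graphs, so nothing is lost. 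If you wanted to match the stricter ``simple arc'' definition without invoking that equivalence, you would need the paper's cycle-removal step (or a small perturbation argument to desingularize your curve while preserving which other strings it meets); as written, your argument leans on the more permissive definition, which is fine but worth saying explicitly.
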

\begin{proof}
We can first eliminate cycles from $C$ by removing infinitesimally small arcs, and can then take an Euler traversal of the resulting tree to obtain a noncrossing path.  (The observation becomes even more obvious if one defines a string as any connected set in the plane, as some authors did~\cite{fox2010separator}.)
\end{proof}

\begin{center}
    {\bf String Graph Construction III}
\end{center}
\begin{enumerate}
    \item Follow step 1 of Construction I.
    \item Apply Lemma~\ref{lem:string-graph-r-division} to $G'$ to obtain the subsets $\VV_i$.  Let $B$ be the boundary vertices, i.e., vertices that are in at least two subsets $\VV_i$.
    Recursively construct a $t_k$-hop spanner for the subgraph induced by  $\VV_i\setminus B$ for each $\VV_i$.  Add all their edges to $\widehat{G}$.  Also, for each vertex in $B$, create a star of size 1 (i.e., a singleton) and remove it from $G'$.  The number of stars is now $O(n/\Delta + \sqrt{\Delta}n/\sqrt{r})$.
    \item For each vertex $u$ that is adjacent to a vertex of at least one star, add an  edge between $u$ and an arbitrary such vertex in such a star to $\widehat{G}$; we say that $u$ is \emph{assigned} to this star.  At most $O(n)$ edges are added this way.
    \item For each star $\sigma$, define its \emph{extended star} $S(\sigma)$ to be the set of all vertices that are in $\sigma$ or assigned to $\sigma$, and define
    the new object $U(\sigma)$ to be the union of all the strings in $S(\sigma)$.  Recursively construct a $t_{k-1}$-hop spanner $\widehat{H}$ for these new objects, which can be viewed as strings by Observation~\ref{obs:string}.  For each edge $U(\sigma)U(\sigma')$ in the spanner $\widehat{H}$, add an edge $ww'$ to $\widehat{G}$, where $w\in S(\sigma)$ and $w'\in S(\sigma')$ are intersecting strings chosen arbitrarily.
\end{enumerate}

\subparagraph*{Hop stretch.} 
For any edge $uv$, 
if both $u$ and $v$ belong to $G'$, then $u$ and $v$ are connected by $t_k$ hops by induction.
Otherwise, one of its vertices, say, $v$ belongs to a star $\sigma$ removed in step~1 or~2.  By step 3, the spanner $\widehat{G}$ connects $u$ to some vertex $v'$ in a possibly different star $\sigma'$.
Since $u$ is in $S(\sigma')$ and $v$ is in $\sigma$, the two objects $U(\sigma)$ and $U(\sigma')$ intersect and,  by induction, 
are connected by $t_{k-1}$ hops in the spanner $\widehat{H}$.  Inside an extended star, 4 hops suffice. Thus, $u$ and $v$ are connected by $5t_{k-1}+3$ hops in $\widehat{G}$ (see Figure \ref{fig:string-graph-iii}).

\subparagraph*{Sparsity.} The size of the $t_k$-hop spanner follows the recurrence
\begin{equation*}
    S_k(n)\ \leq \max_{n_1,n_2,\ldots\le r:\ \sum_i n_i \le n} \left( \sum_i S_k(n_i) + S_{k-1}(O(n/\Delta + \sqrt{\Delta}n/\sqrt{r})) + O(n)\right).
\end{equation*}
For the base case, we have $S_1(n)=O(n\log^3n)$ by Theorem~\ref{thm:string-graph:3hop}.
For $k=2$, by choosing $\Delta=\log^3n$ and $r=\Delta^3$,
the recurrence gives $S_2(n)=O(n\log^*n)$.
For $k>2$, we choose $\Delta=c_0\alpha_{k-1}(n)$
and $r=\Delta^3$ for a sufficiently large constant $c_0$.
It is straightforward to show by induction
that $S_k(n)\le c_0n\alpha_k(n)$.

\begin{figure}
    \centering
    \includegraphics[page=1,scale=1.1]{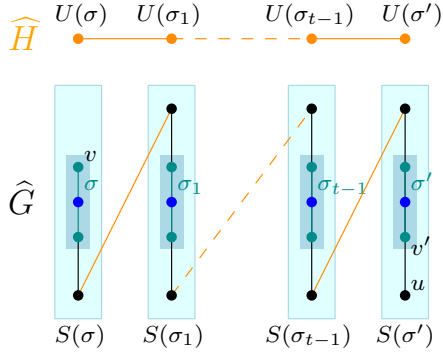}
    \caption{$v\in S(\sigma)$ and $v'\in S(\sigma')$. If $\widehat{H}$ connects $U(\sigma)$ and $U(\sigma')$ by $t$ hops, then $\widehat{G}$ connects $v$ and $v'$ by $5t+3$ hops. The smaller boxes represent stars in $\widehat{G}$; the larger boxes represent extended stars. Blue dots at the center of the boxes represent centers of the stars.}
    \label{fig:string-graph-iii}
\end{figure}

\begin{theorem}
    \label{thm:string-graph-ackerman}
     Every string graph with $n$ vertices admits a $\frac{3}{4}(5^k-1)$-hop spanner of size $O(n\alpha_k(n))$ for any $k\ge 2$.
\end{theorem}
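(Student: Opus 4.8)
The plan is to follow the template already established in Construction II, but replace its fixed three-level structure with a self-recursion across the inverse Ackermann hierarchy, exactly as sketched in the excerpt. Concretely, given the target hop-stretch $t_k$, I would run the construction labeled \textbf{String Graph Construction III}: first peel off high-degree stars (step 1), then apply the multiple-subsets separator of Lemma~\ref{lem:string-graph-r-division} to the bounded-degree remainder $G'$ and recurse with the \emph{same} parameter $k$ inside each piece $\VV_i\setminus B$, while dumping the boundary vertices $B$ into singleton stars. Steps 3--4 then glue $u$ to a star and, crucially, collapse each extended star $S(\sigma)$ into a single new string $U(\sigma)$ via Observation~\ref{obs:string}, and recurse with parameter $k-1$ on the collection $\{U(\sigma)\}$ to connect the stars to one another. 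The base case is $k=1$, which is Theorem~\ref{thm:string-graph:3hop}.

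For the hop-stretch analysis I would argue exactly as for Construction II, but tracking the $U(\sigma)$ recursion. If both endpoints of an edge $uv$ survive in $G'$, the internal recursion (same $k$) connects them in $t_k$ hops. Otherwise $v$ lies in some star $\sigma$, step 3 connects $u$ to a vertex $v'$ of some star $\sigma'$ with $u\in S(\sigma')$, the objects $U(\sigma)$ and $U(\sigma')$ intersect (because $u\in S(\sigma')$ is a string meeting $v\in\sigma\subseteq S(\sigma)$), so by the $(k-1)$-recursion they are joined in $t_{k-1}$ hops in $\widehat H$; each such hop $U(\sigma_j)U(\sigma_{j+1})$ is realized by an intersecting pair $w_j\in S(\sigma_j)$, $w_j'\in S(\sigma_{j+1})$, and inside each extended star one routes from the relevant vertex to the star center and back out — costing at most $4$ hops per intermediate extended star, plus the boundary adjustments. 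Bookkeeping the $5$ extended stars' worth of internal $\le 4$-hop detours against $t_{k-1}$ intermeshed hops yields the recurrence $t_k = 5t_{k-1}+3$, matching the definition; since $t_1=3$ one gets $t_k=\frac34(5^k-1)$.

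For the size bound, the recurrence is the one displayed in the excerpt:
\begin{equation*}
    S_k(n)\ \le\ \max_{n_1,n_2,\ldots\le r:\ \sum_i n_i\le n}\Bigl(\sum_i S_k(n_i) + S_{k-1}\bigl(O(n/\Delta + \sqrt{\Delta}\,n/\sqrt r)\bigr) + O(n)\Bigr).
\end{equation*}
The key observations are: the $\sum_i S_k(n_i)$ term, being linear-ish in each piece and the pieces summing to at most $n$ plus the small boundary overhead $O(\sqrt{\Delta}\,n/\sqrt r)$, contributes essentially $S_k$ of something only slightly above $n$ — so the self-recursion in $k$ costs only an $O(n)$ additive term per level once the boundary complexity is negligible; and the number of stars handed to the $(k-1)$-recursion is $O(n/\Delta + \sqrt{\Delta}\,n/\sqrt r)$, which with $r=\Delta^3$ becomes $O(n/\Delta)$. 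Plugging in $\Delta=c_0\alpha_{k-1}(n)$ (and $\Delta=\log^3 n$, $r=\Delta^3$ for $k=2$, using the $S_1=O(n\log^3 n)$ base case to get $S_2=O(n\log^* n)=O(n\alpha_2(n))$), one shows by induction on $k$ that $S_k(n)\le c_0 n\,\alpha_k(n)$: the $S_{k-1}(O(n/\Delta))$ term is at most $c_0\cdot O(n/\Delta)\cdot\alpha_{k-1}(n/\Delta)=O(n)$ by the choice of $\Delta$, and summing $O(n)$ over the $O(\alpha_k(n))$ levels of the same-$k$ recursion (each level shrinking $n$ geometrically while only boundary overhead is reintroduced) gives $O(n\,\alpha_k(n))$, with the constant absorbed into $c_0$.

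The main obstacle — really the only delicate point — is controlling the interaction between the \emph{same-$k$} recursion and the reintroduction of boundary vertices as new stars: one must verify that the boundary complexity $O(\sqrt\Delta\,n/\sqrt r)$ is genuinely lower-order (hence the choice $r=\Delta^3$, making it $O(n/\Delta)$), so that it neither blows up the piece sizes in the $\sum_i S_k(n_i)$ term nor swamps the $O(n/\Delta)$ star count fed to the $(k-1)$-recursion; and correspondingly that the depth of the same-$k$ recursion, weighted by the $O(n)$ additive cost per node, telescopes to $O(n\,\alpha_k(n))$ rather than picking up an extra logarithmic factor. Everything else — the hop-stretch accounting and Observation~\ref{obs:string} letting a connected union of strings be treated as one string — is routine given the earlier constructions.
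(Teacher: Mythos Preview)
Your proposal is correct and follows essentially the same approach as the paper: you invoke Construction~III with the same parameter choices ($\Delta=\log^3 n$, $r=\Delta^3$ for $k=2$; $\Delta=c_0\alpha_{k-1}(n)$, $r=\Delta^3$ for $k>2$), derive the same hop recurrence $t_k=5t_{k-1}+3$ via the 4-hop traversal inside an extended star, and solve the same size recurrence by induction to get $S_k(n)\le c_0 n\,\alpha_k(n)$. The one point you flag as delicate---that the boundary complexity $O(\sqrt{\Delta}\,n/\sqrt{r})$ must be lower-order so as not to inflate either $\sum_i n_i$ or the star count---is exactly what the choice $r=\Delta^3$ is for, and the paper handles it the same way.
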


\subparagraph*{Remarks.}
We have not attempted to optimize the hop stretch in the above theorem.  Since the constant factor in the above size bound does not depend on $k$, we can also choose $k=\alpha(n)$ and obtain an $O(5^{\alpha(n)})$-hop spanner with $O(n)$ size.

Although we have cited Lee's string-graph separator theorem~\cite{lee2017separators}, the weaker separator bound
by Fox and Pach~\cite{fox2010separator} is actually sufficient to prove Theorems~\ref{thm:string-graph-basic}--\ref{thm:string-graph-ackerman} (although for Theorem~\ref{thm:string-graph:3hop}, the bound would have more logarithmic factors).

\section{Fat Objects in $\R^d$}\label{sec:fat}

In this section, we turn our attention to the case of fat objects.  We will use the following definition of fatness~\cite{chan2003polynomial}.
Here, the \emph{side length} of an object refers to the side length of its smallest enclosing hypercube.

\begin{definition}
A collection of objects is \emph{$c$-fat} if for every hypercube $\gamma$ with side length $\ell$,
there exist $c$ points hitting all objects that intersect $\gamma$ and have
side length at least $\ell$.
\end{definition}

Our spanner construction will use quadtrees together with 
a known ``shifting lemma''~\cite{chan2003polynomial} (based
on an earlier work~\cite{Chan98}).

\begin{definition}
A \emph{quadtree cell} is a hypercube of the
form $[i_1/2^k,(i_1+1)/2^k)\times\cdots\times[i_d/2^j,(i_d+1)/2^k)$ for integers $i_1,\ldots,i_d,k$.

An object $u$ of side length $\ell$ is \emph{$C$-aligned} if it is contained in 
a quadtree cell with side length at most $C\ell$. 
\end{definition}

\begin{lemma}[Quadtree shifting~\cite{chan2003polynomial}]
    \label{lem:quadtree-shifting}
    Fix an odd number $d^*>d$. Let $\tau^{(j)}=(j/d^*,\ldots,j/d^*)\in\R^d$.
    For any object $u\subset [0,1)^d$, the shifted object $u+\tau^{(j)}$ is $(2d^*)$-aligned for all but at most $d$ indices $j\in\{0,\ldots,d^*-1\}$.
\end{lemma}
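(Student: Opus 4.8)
The plan is to reduce the statement to a coordinate-wise counting argument once the right scale of quadtree cells is fixed. Write $\ell$ for the side length of $u$ (equal to that of every $u+\tau^{(j)}$); assume $\ell>0$. Fix a closed hypercube $Q=\prod_{i=1}^d[a_i,a_i+\ell]$ with $u\subseteq Q$, and let $2^{-m}$ be the unique power of $2$ in $[d^*\ell,\,2d^*\ell)$ (unique because this interval has the form $[x,2x)$). Quadtree cells of side length $2^{-m}$ tile $\R^d$ and satisfy $2^{-m}\le 2d^*\ell$, so it suffices to bound the number of $j\in\{0,\dots,d^*-1\}$ for which $Q+\tau^{(j)}$ is \emph{not} contained in one such cell. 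Since $\ell\le 2^{-m}$, the product hypercube $Q+\tau^{(j)}$ lies in a single cell iff for every coordinate $i$ the interval $[a_i+j/d^*,\,a_i+j/d^*+\ell]$ is contained in some $[c2^{-m},(c+1)2^{-m})$, which fails precisely when a multiple of $2^{-m}$ lies in $(a_i+j/d^*,\,a_i+j/d^*+\ell]$; call such a $j$ \emph{bad for coordinate $i$}. Thus $j$ is bad iff bad for some coordinate, and by a union bound over the $d$ coordinates it is enough to prove: for each coordinate $i$, at most one $j$ is bad for $i$.

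To prove this key claim, I would first dispose of the easy case $2^{-m}\ge 2$: then $u+\tau^{(j)}\subseteq[0,2)^d\subseteq[0,2^{-m})^d$ is contained in a single cell for every $j$, so no $j$ is bad. Otherwise $2^{-m}\le 1$, so $m\ge 0$ and $s:=2^m$ is a positive integer with $s\ell\le 1/d^*$; crucially, $\gcd(s,d^*)=1$ since $d^*$ is odd. Scaling coordinate $i$ by $s$, index $j$ is bad for $i$ exactly when the half-open interval $(sa_i+sj/d^*,\,sa_i+sj/d^*+s\ell]$ of length $s\ell<1$ contains an integer, equivalently when $\mathrm{frac}(sa_i+sj/d^*)$ lies in the arc $[1-s\ell,\,1)$ of $\R/\mathbb{Z}$, which has length $s\ell\le 1/d^*$. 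Since $sj/d^*\equiv(sj\bmod d^*)/d^*\pmod 1$ and $j\mapsto sj\bmod d^*$ is a bijection of $\{0,\dots,d^*-1\}$ (using $\gcd(s,d^*)=1$), the $d^*$ numbers $\mathrm{frac}(sa_i+sj/d^*)$ are exactly the $d^*$ points equally spaced by $1/d^*$ around the circle (starting from $\mathrm{frac}(sa_i)$), hence pairwise distinct. An arc of length at most $1/d^*$ contains at most one of them, so at most one $j$ is bad for coordinate $i$.

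Summing the claim over the $d$ coordinates gives at most $d$ bad indices; for every other $j$ we have $u+\tau^{(j)}\subseteq Q+\tau^{(j)}$ inside a quadtree cell of side length $2^{-m}\le 2d^*\ell$, so $u+\tau^{(j)}$ is $(2d^*)$-aligned. The step that needs the most care is the half-open bookkeeping -- checking that ``$Q+\tau^{(j)}$ lies in one cell'' translates into ``no multiple of $2^{-m}$ in $(a_i+j/d^*,\,a_i+j/d^*+\ell]$'' translates into ``$\mathrm{frac}(\cdots)\notin[1-s\ell,1)$'' with the right endpoint conventions, and that a half-open arc of length exactly $1/d^*$ (the borderline case $2^{-m}=d^*\ell$) still meets at most one of the equally spaced points. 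The conceptual core, and the only place the hypothesis ``$d^*$ odd'' is used, is that $2^m$ is then invertible modulo $d^*$, so the $d^*$ shifts sample $d^*$ distinct equally spaced residues in each coordinate; everything else is routine.
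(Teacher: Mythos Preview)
The paper does not include its own proof of this lemma; it is quoted from \cite{chan2003polynomial} and used as a black box, so there is nothing to compare against in the present paper. Your argument is correct and is essentially the original one: fix the dyadic scale $2^{-m}\in[d^*\ell,2d^*\ell)$, observe that failure to be $(2d^*)$-aligned at this scale is a coordinate-wise event, and in each coordinate use that the $d^*$ shifts $j/d^*$ reduced modulo $2^{-m}$ hit $d^*$ equally spaced residues (here is where oddness of $d^*$, hence $\gcd(2^m,d^*)=1$, is used), so an interval of length $\ell\le 2^{-m}/d^*$ can straddle a grid line for at most one~$j$. Your half-open bookkeeping is accurate, including the borderline case $s\ell=1/d^*$, and your Case~1 ($2^{-m}\ge 2$) is legitimate because the paper's definition of quadtree cell allows negative exponents~$k$, so $[0,2)^d$ is itself a quadtree cell.
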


Choose $d^*=2d+1$.  By rescaling, we may assume that all input objects are in $[0,1)^d$. Then for any pair of objects $u$ and $v$, there exists at least one index $j\in\{0,\ldots,d^*-1\}$ such that $u+\tau^{(j)}$ and $v+\tau^{(j)}$ are both
$(2d^*)$-aligned.  For each $j$, it suffices to construct a hop spanner for the subset of all objects $u$ such that $u+\tau^{(j)}$ is $(2d^*)$-aligned; we can then output the union of these $d^*$ spanners.

Thus, from now on, we may assume that
all given objects are $(2d^*)$-aligned.

We warm up by describing a 3-hop spanner with $O(n\log n)$ size, before describing a larger-hop spanner with inverse Ackermann complexity.

\subsection{3-Hop Spanner of $O(n\log n)$ Size}

Our 3-hop spanner will use the following lemma, which follows directly by
taking a tree centroid in the quadtree:

\begin{lemma}[Quadtree centroid~\cite{AryaMNSW98,Chan98}]\label{lem:quadtree:centroid}
For any set of $n$ points in $\R^d$,
there exists a quadtree cell such that 
the number of points inside and the number of points outside
are both at most $\frac{2^d}{2^d+1}n$.
\end{lemma}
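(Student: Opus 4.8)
The plan is to build a quadtree over the point set $P$ and run a standard tree-centroid descent on it. First I would refine an (uncompressed) quadtree over $P$ until every leaf cell contains at most one point; this terminates since the $n$ points are distinct, so after finitely many subdivisions every cell has diameter less than the minimum pairwise distance. View the resulting finite tree as a $2^d$-ary tree whose nodes are quadtree cells, and for a cell $v$ let $w(v)$ be the number of points of $P$ lying in $v$ (equivalently, in the subtree rooted at $v$). These weights are non-increasing along any root-to-leaf path, the root has weight $n$, every leaf has weight at most $1$, and the (at most) $2^d$ children of any internal node partition the points of that node.

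Next I would locate the desired cell greedily: start at the root and repeatedly move to a child of maximum weight, stopping at the first cell $Q$ on this path with $w(Q)\le \frac{2^d}{2^d+1}n$. I need to check that this descent is well-defined and terminates. Assuming $n\ge 2$ (the only interesting regime), we have $\frac{2^d}{2^d+1}n\ge \frac23\cdot 2>1$, so any cell $v$ with $w(v)>\frac{2^d}{2^d+1}n$ contains at least two points, hence is internal and has children to descend into; and since every leaf has weight at most $1\le \frac{2^d}{2^d+1}n$, the descent must stop at some cell $Q$ meeting the stopping condition.

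Then I would establish the matching lower bound. Let $u$ be the parent of $Q$. By the choice of $Q$ as the \emph{first} cell on the path satisfying the stopping condition, $w(u)>\frac{2^d}{2^d+1}n$. Since $Q$ is the heaviest among $u$'s $2^d$ children, which partition the points of $u$, we get $w(Q)\ge w(u)/2^d>\frac{n}{2^d+1}$. Hence $\frac{n}{2^d+1}<w(Q)\le \frac{2^d}{2^d+1}n$, so the number of points inside $Q$ is at most $\frac{2^d}{2^d+1}n$, while the number outside is $n-w(Q)<n-\frac{n}{2^d+1}=\frac{2^d}{2^d+1}n$, which is exactly the claimed cell. (The degenerate case $n\le 1$, where no such cell can exist, never arises in the recursion.)

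I do not expect a real obstacle here: this is a routine centroid computation on a $2^d$-ary tree. The only two points needing care are the termination of the descent when the points are heavily clustered (handled by refining the quadtree all the way down so leaves are essentially singletons), and pinning down the constant as precisely $\frac{2^d}{2^d+1}$ — this is what forces the descent to go into the \emph{heaviest} child, so that the factor-$1/2^d$ loss at the final step is balanced against the stopping threshold.
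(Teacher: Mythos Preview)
Your proposal is correct and is precisely the tree-centroid argument the paper has in mind: the paper does not spell out a proof at all but simply states that the lemma ``follows directly by taking a tree centroid in the quadtree'' and cites \cite{AryaMNSW98,Chan98}. Your greedy descent into the heaviest child, stopping at the first cell of weight at most $\frac{2^d}{2^d+1}n$, is exactly that centroid computation, and your handling of termination and of the constant is clean.
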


\begin{center}
    {\bf Fat Object Construction I}
\end{center}
\begin{enumerate}
    \item Apply Lemma~\ref{lem:quadtree:centroid} to the leftmost points of the objects to obtain a quadtree cell $\gamma$.
Recursively construct a 3-hop spanner for the objects completely inside $\gamma$ 
and for the objects completely outside $\gamma$.
Add all their edges to $\widehat{G}$.
    \item
    Let $P_\gamma$ be a set of points hitting all objects that intersect $\partial\gamma$ and have side length at least $\ell_\gamma/(2d^*)$, where $\ell_\gamma$ denotes the side length of $\gamma$.  A hitting set of size $|P_\gamma|=O_d(c)$ exists
    by definition of $c$-fatness (since $\partial\gamma$ can be covered by $(2d^*)^d$ hypercubes of side length $\ell_\gamma/(2d^*)$).
%    hitting points for $\Delta$ according to Lemma \ref{lem:fat-object-hitting-points}. 
    For each point $p\in P_\gamma$, build a star $S(p)$ connecting all objects hit by $p$, with the center chosen arbitrarily.  Add the edges of these stars  to $\widehat{G}$.  At most $O_d(cn)$ edges are added this way.
    \item 
    For each object $u$ and for each star $S(p)$ that contains an object intersecting $u$,
    add an edge between $u$ and an arbitrary such object in $S(p)$ to $\widehat{G}$.  At most $O(n)$ edges are added this way.   
\end{enumerate}

\subparagraph*{Hop stretch.} For any edge $uv$, if the objects $u$ and $v$ are both inside $\gamma$ or both outside~$\gamma$, then $u$ and $v$ are connected by 3 hops by induction.  Otherwise, one of the objects, say, $v$, intersects $\partial\gamma$ (see Figure \ref{fig:fat-object-i}).  Observe that $v$ has side length
at least $\ell_\gamma/(2d^*)$, since $v$ is $(2d^*)$-aligned.
Thus, $v$ belongs to a star $S(p)$ from step~2. By step 3, the spanner $\widehat{G}$ connects $u$ to some object $v'$ in the same star $S(p)$.  Since $v'$ and $v$ are connected by 2 hops in $\widehat{G}$, $u$ and $v$ are connected by 3 hops.

\subparagraph*{Sparsity.} 
%Let $n_1$ and $n_2$ be the sizes of the two regions defined by the separator $X$ minus $|X|$. 
The size $S(n)$ of the spanner follows the recurrence
\begin{equation*}
    S(n)\ \leq\ \max_{n_1,n_2\le 2^dn/(2^d+1):\ n_1+n_2\le n} (S(n_1) + S(n_2) + O_d(cn)).
\end{equation*}
The recurrence solves to $S(n)=O_d(cn\log n)$.

\begin{theorem}
    \label{thm:fat-object-graph-d-dimension-basic}
    The intersection graph of $n$ fat objects in $\Real^d$ admits a 3-hop spanner of size $O(n\log n)$.
\end{theorem}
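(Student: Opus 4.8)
The plan is to combine the quadtree shifting lemma (Lemma~\ref{lem:quadtree-shifting}) with a divide-and-conquer recursion driven by the quadtree centroid lemma (Lemma~\ref{lem:quadtree:centroid}), using $c$-fatness only to control the objects that straddle a separating cell. First I would invoke the shifting reduction already set up: picking $d^*=2d+1$ and rescaling into $[0,1)^d$, every pair of objects becomes simultaneously $(2d^*)$-aligned under at least one of the $d^*$ shifts $\tau^{(j)}$, so it suffices to build a $3$-hop spanner for each shift's subfamily of $(2d^*)$-aligned objects and output the union, which costs only a constant factor in size. Hence I may assume from the outset that every input object of side length $\ell$ is contained in a quadtree cell of side length at most $2d^*\,\ell$.

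Next comes the recursion. I would apply Lemma~\ref{lem:quadtree:centroid} to a canonical representative point of each object (say its leftmost point) to get a quadtree cell $\gamma$ so that the objects lying entirely inside $\gamma$ and those lying entirely outside $\gamma$ each form at most a $\tfrac{2^d}{2^d+1}$ fraction of the $n$ objects; recursively build $3$-hop spanners on these two groups and keep their edges. The only objects left are those crossing $\partial\gamma$, and the crucial observation is that $(2d^*)$-alignment forces any such object to have side length at least $\ell_\gamma/(2d^*)$. Since $\partial\gamma$ is covered by $(2d^*)^d$ hypercubes of side length $\ell_\gamma/(2d^*)$, the definition of $c$-fatness yields a hitting set $P_\gamma$ of size $O_d(c)$ meeting all boundary-crossing objects. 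For each $p\in P_\gamma$ I would add a star joining all objects through $p$ ($O_d(cn)$ edges over the whole recursion), and then for every object $u$ and every such star containing an object intersecting $u$, add one edge from $u$ into that star ($O(n)$ edges). This gives the $3$-hop property: for an edge $uv$ with, say, $v$ crossing $\partial\gamma$, $v$ lies in some star $S(p)$, the spanner joins $u$ to some $v'\in S(p)$ in one hop, and $v'$ reaches $v$ via the star center in two more, while edges with both endpoints inside (resp.\ outside) $\gamma$ are handled by induction.

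Finally I would solve the size recurrence $S(n)\le S(n_1)+S(n_2)+O_d(cn)$ with $n_1,n_2\le \tfrac{2^d}{2^d+1}n$ and $n_1+n_2\le n$, whose recursion tree has $O(\log n)$ levels with $O_d(cn)$ work per level, giving $S(n)=O_d(cn\log n)=O(n\log n)$ for constant $c$ and $d$. I expect the only real obstacle to be the treatment of boundary-crossing objects: one must argue both that such objects are not too small relative to $\gamma$ --- this is exactly where $(2d^*)$-alignment is indispensable, since an unrestricted tiny object could cross $\partial\gamma$ while escaping any constant-size hitting set --- and that $O_d(1)$ hitting points suffice, which is where $c$-fatness, applied to the constantly many small cubes tiling $\partial\gamma$, does the work. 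The shifting reduction and the centroid recursion are otherwise routine.
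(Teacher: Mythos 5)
Your proposal matches the paper's proof essentially step for step: the shifting reduction to $(2d^*)$-aligned objects, the quadtree-centroid recursion on representative points, the $c$-fatness hitting set $P_\gamma$ for boundary-crossing objects (using that $(2d^*)$-alignment forces any such object to have side length at least $\ell_\gamma/(2d^*)$), the per-point stars plus one edge from each object into each relevant star, and the recurrence $S(n)\le S(n_1)+S(n_2)+O_d(cn)$ solving to $O(n\log n)$. This is the same construction and argument as in the paper, correctly reproduced.
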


\begin{figure}
    \centering
    \includegraphics[page=1,scale=1.2]{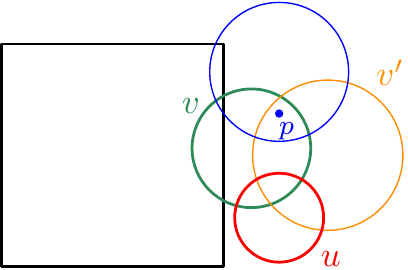}
    \caption{$uv'\in\widehat{G}$, and $v'$ is connected to $v$ via the center of star $S(p)$ (the blue disk).}
    \label{fig:fat-object-i}
\end{figure}

\subsection{$O_k(1)$-Hop Spanner with $O_{d,k}(n\alpha_k(n))$  Size}

% \begin{figure}
%     \centering
%     \includegraphics[scale=0.6,page=3]{figures/fat-object.pdf}
%     \caption{A generalized quadtree cell $\gamma$ and the hitting set $P_\gamma$ (shown as dots).}
%     \label{fig:fat-object-generalized-quadtree-cell}
% \end{figure}

To obtain hop spanners of still smaller size, we need a generalized version of Lemma~\ref{lem:quadtree:centroid} that partitions into multiple subsets:

\begin{definition}
A \emph{generalized quadtree cell} $\gamma$ refers to either a quadtree cell or the difference of an outer quadtree cell $\gamma^+$ and an inner quadtree cell $\gamma^-$ (in the former case, we let
$\gamma^+=\gamma$ and $\gamma^-=\emptyset$).
\end{definition}

\begin{lemma}[Quadtree partitioning]\label{lem:quadtree:partition}
Given parameter $r$,
for any set of $n$ points in $[0,2)^d$, there exists a partition of $[0,2)^d$ into $O(n/r)$ generalized quadtree cells, each containing at most $r$ points.
\end{lemma}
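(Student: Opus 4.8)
The plan is to prove Lemma~\ref{lem:quadtree:partition} by a recursive application of the quadtree centroid lemma (Lemma~\ref{lem:quadtree:centroid}), in direct analogy with how Lemma~\ref{lem:string-graph-r-division} was derived from Lemma~\ref{thm:string-graph-separator}. First I would observe that $[0,2)^d$ is itself a single quadtree cell (after appropriate indexing), so it is a legitimate generalized quadtree cell containing all $n$ points. If $n \le r$, we are done with one piece. Otherwise, apply Lemma~\ref{lem:quadtree:centroid} to the $n$ points to obtain a quadtree cell $\gamma_0$ so that both the inside-count and the outside-count are at most $\frac{2^d}{2^d+1}n$. This splits the current region into two generalized quadtree cells: the inner cell $\gamma_0$ (a genuine quadtree cell), and the ``annulus'' $[0,2)^d \setminus \gamma_0$, which is exactly the difference of the outer cell $[0,2)^d$ and the inner cell $\gamma_0$ — hence a generalized quadtree cell by definition. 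Recurse on each of the two parts.

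The one subtlety is that the recursion must stay within the class of generalized quadtree cells: when we recurse on a region that is already an annulus $\gamma^+ \setminus \gamma^-$ and apply the centroid lemma to the points lying in it, we obtain a new quadtree cell $\gamma_0'$; since all these points lie in $\gamma^+$, we have $\gamma_0' \subseteq \gamma^+$ (a quadtree cell containing points of $\gamma^+$ is nested inside $\gamma^+$), and we can take $\gamma_0'$ large enough that it also contains $\gamma^-$ if $\gamma^- \neq \emptyset$ — more carefully, since the centroid cell can be chosen as the smallest quadtree cell straddling the centroid split, I would instead argue that we simply replace the region $\gamma^+\setminus\gamma^-$ by the two pieces $\gamma_0'\setminus\gamma^-$ and $\gamma^+\setminus\gamma_0'$ whenever $\gamma^-\subseteq\gamma_0'\subseteq\gamma^+$, and handle the general position of $\gamma_0'$ relative to $\gamma^-$ by noting that $\gamma_0'$ and $\gamma^-$, being quadtree cells, are either nested or disjoint; if they are disjoint we split differently (three pieces, all still generalized quadtree cells, each with at most $\frac{2^d}{2^d+1}$ fraction of the points, or we restrict the centroid lemma to points outside $\gamma^-$ from the start). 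The cleanest route is to apply the centroid lemma only to the points currently in the active region and to note that quadtree cells form a laminar family, so at most two or three new generalized quadtree cells arise at each split, each carrying at most a $\frac{2^d}{2^d+1}$-fraction of the active points.

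For the count of pieces, let $N(n)$ denote the number of generalized quadtree cells produced on $n$ points. Each recursive split replaces one region of $n$ active points by a constant number (at most three) of regions, each with at most $\frac{2^d}{2^d+1}n$ points, and we stop subdividing a region once it holds at most $r$ points. Since each split is balanced in the constant-fraction sense, the recursion tree has depth $O(\log(n/r))$, but more importantly the standard argument for such fractional-balance divisions gives $N(n) = O(n/r)$: each leaf region holds between (roughly) $r/\text{const}$ and $r$ points, and the regions at the leaves partition the point set, so there are $O(n/r)$ of them; the internal nodes number at most a constant times the leaves since the tree has bounded branching. I would write this as the recurrence $N(n) \le \max_{n_1+n_2+n_3 \le n,\ n_i \le \frac{2^d}{2^d+1}n} (N(n_1)+N(n_2)+N(n_3))$ with base case $N(n)=1$ for $n\le r$, which solves to $N(n)=O(n/r)$. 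The main obstacle — really the only place requiring care — is confirming that the annular regions stay expressible as a difference of two quadtree cells throughout the recursion; once the laminar structure of quadtree cells is invoked, this is immediate, and the rest is a routine recursion solve.
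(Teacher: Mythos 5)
Your proposal is correct and is essentially the second alternative the paper itself gives: the paper's proof remark is exactly ``apply Lemma~\ref{lem:quadtree:centroid} recursively, stopping when cells have at most $r$ points, with further splitting to ensure each generalized cell has at most one inner quadtree cell (citing Arya et al.),'' together with the Frederickson tree-partitioning route as an alternative. One small inaccuracy worth fixing: in the disjoint case ($\gamma_0' \cap \gamma^- = \emptyset$, both inside $\gamma^+$), you cannot always split into exactly \emph{three} generalized quadtree cells when $d\ge 2$; taking the smallest quadtree cell $\delta$ with $\gamma_0',\gamma^- \subseteq \delta \subseteq \gamma^+$ and expanding to $\delta$'s $2^d$ children naturally yields $O(2^d)$ pieces per split (one outer annulus, the two children holding $\gamma_0'$ and $\gamma^-$ become annuli, the remaining $2^d-2$ children are plain cells, plus $\gamma_0'$ itself). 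Each such piece is still contained in $\gamma_0'$ or in its complement, so the $\frac{2^d}{2^d+1}$ fraction bound per piece holds, and the recurrence $N(n) \le \sum_i N(n_i) + O(2^d)$ with $n_i \le \frac{2^d}{2^d+1} n$ still resolves to $O_d(n/r) = O(n/r)$; with that correction the argument is complete.
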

\begin{proof}
This follows directly by applying the
tree partitioning scheme by Frederickson \cite{frederickson1997ambivalent} to the quadtree,
or alternatively by applying Lemma~\ref{lem:quadtree:centroid} recursively (stopping when cells have at most $r$ points each, and with further splitting to ensure each generalized cell has at most one inner quadtree cell---e.g., see \cite{AryaMNSW98}).
\end{proof}

\newcommand{\ggamma}{\widehat{\gamma}}

\begin{observation}
    \label{obs:fat-object-union}
    Given a collection of $c$-fat $C$-aligned  objects in $\R^d$,
     a union of a subset of objects all hit by a common point can be viewed as a $(4^dc)$-fat $C$-aligned object.
\end{observation}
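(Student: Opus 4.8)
The plan is to establish the two asserted properties of the union $U=\bigcup_{u\in T}u$ separately, where $T$ denotes the subcollection of original objects sharing a common point $p$ (so $p\in u$ for every $u\in T$). Throughout, write $\ell_u$ for the side length of $u$, $\ell_U$ for the side length of $U$, and $\ell_{\max}=\max_{u\in T}\ell_u$.

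\emph{Alignment.} First I would record the bound $\ell_{\max}\le\ell_U\le 2\ell_{\max}$: the left inequality holds because $U$ contains the largest member of $T$ and so cannot have a smaller enclosing hypercube, and the right inequality because each $u\in T$ lies in a hypercube of side length $\ell_u\le\ell_{\max}$ through $p$ and is therefore within $L_\infty$-distance $\ell_{\max}$ of $p$, so all of $T$---hence $U$---fits inside the hypercube of side length $2\ell_{\max}$ centered at $p$. Now invoke $C$-alignedness of the original objects: for each $u\in T$ fix a quadtree cell $Q_u\supseteq u$ of side length at most $C\ell_u$. Since $p\in u\subseteq Q_u$, all the cells $Q_u$ contain $p$, so by the nesting property of (half-open) quadtree cells they are pairwise comparable under inclusion and hence form a chain; let $Q^\dagger=Q_{u^\dagger}$ be the largest. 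Then $Q^\dagger\supseteq u$ for every $u\in T$, so $Q^\dagger\supseteq U$, and its side length is at most $C\ell_{u^\dagger}\le C\ell_{\max}\le C\ell_U$. Thus $U$ lies in a quadtree cell of side length at most $C\ell_U$, i.e.\ $U$ is $C$-aligned.

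\emph{Fatness.} Here the statement to prove is that any collection $\mathcal{C}'$ each of whose members is such a union---allowing singletons, which absorbs the original objects themselves---is $(4^dc)$-fat. Fix a hypercube $\gamma$ of side length $\ell$ and let $\gamma^+$ be the concentric hypercube of side length $2\ell$. For each $U\in\mathcal{C}'$ with $U\cap\gamma\ne\emptyset$ and $\ell_U\ge\ell$, I would exhibit an \emph{original} object $w\in T_U$ with $w\cap\gamma^+\ne\emptyset$ and $\ell_w\ge\ell/2$, as follows. Pick $u_0\in T_U$ meeting $\gamma$. If $\ell_{u_0}\ge\ell/2$, take $w=u_0$, which meets $\gamma\subseteq\gamma^+$. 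Otherwise $u_0$ has $L_\infty$-diameter below $\ell/2$ and meets $\gamma$, so the common point $p$ of $T_U$ satisfies $p\in u_0\subseteq\gamma^+$; since $\ell_U\le 2\ell_{\max}(T_U)$ forces some $u^*\in T_U$ to have $\ell_{u^*}\ge\ell_U/2\ge\ell/2$, and $p\in u^*$, the object $w=u^*$ meets $\gamma^+$ and has side length $\ge\ell/2$. Finally, cover $\gamma^+$ by the $4^d$ hypercubes of side length $\ell/2$ and apply $c$-fatness of the original collection to each: this produces $4^dc$ points hitting every original object that meets $\gamma^+$ and has side length $\ge\ell/2$, hence hitting each of the objects $w$ above, hence hitting each corresponding $U\supseteq w$. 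So $\mathcal{C}'$ is $(4^dc)$-fat.

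The argument is otherwise routine; the only points needing a little care are the nesting of quadtree cells through a common point (immediate for half-open dyadic cells) and the two bookkeeping inequalities $\ell_{\max}\le\ell_U\le 2\ell_{\max}$. I do not anticipate a genuine obstacle: the whole proof rests on the single observation that forcing the constituents to share a point keeps both their quadtree cells and their fatness certificates under control, the price being that we must enlarge $\gamma$ to the doubled cube $\gamma^+$ and lower the threshold side length to $\ell/2$, which is exactly where the factor $4^d$ enters.
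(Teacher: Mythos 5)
Your proof is correct and follows essentially the same strategy as the paper's: expand $\gamma$ to a concentric hypercube of side $2\ell$, cover it by $4^d$ hypercubes of side $\ell/2$, invoke $c$-fatness on each, and then show that some original object of side length at least $\ell/2$ meets the expanded cube. You organize the case analysis by whether the constituent object meeting $\gamma$ is already large (taking $w=u_0$) versus small (forcing $p\in\gamma^+$ and then using a large $u^*$ through $p$), whereas the paper splits on whether $p_0\in\widehat{\gamma}$; these are interchangeable and rest on the same two facts, namely that $U\cap\gamma\neq\emptyset$ supplies a constituent meeting $\gamma$, and that $\ell_U\le 2\ell_{\max}$ supplies a constituent of side at least $\ell/2$. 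You also spell out the $C$-alignedness claim---which the paper dispatches in one sentence---via the chain property of dyadic cells through the common point $p$ together with $\ell_{\max}\le\ell_U$; this is the right justification and worth recording.
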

\begin{proof}
Consider a hypercube $\gamma$ with side length~$\ell$.
Expand $\gamma$ into a hypercube $\ggamma$ with side length $2\ell$, keeping the same center.
There exists a set $P$ of $4^dc$ points hitting all objects that intersect $\ggamma$ and have side length at least $\ell/2$.

Now take a subset $S$ of objects containing a common point $p_0$.  Let $U$ be the union of the objects in $S$.
Suppose that $U$ intersects $\gamma$ and has side length at least $\ell$.
\begin{itemize}
\item Case 1: $p_0\in\ggamma$.
Some object $u\in S$ has side length at least $\ell/2$.
Since $u$ contains $p_0$ and thus intersects $\ggamma$,
we know that $u$ is hit by $P$, and so $U$ is hit by $P$.
\item Case 2: $p_0\not\in\ggamma$.
Some object $u\in S$ intersects $\gamma$, and $u$ must have side length at least $\ell/2$.  Thus, $u$ is hit by $P$, and so $U$ is hit by $P$.
\end{itemize}
This proves $(4^dc)$-fatness of $U$.  The $C$-alignedness of $U$ follows from the $C$-alignedness of the individual objects in $S$.
\end{proof}

Let $t_1=3$
and $t_k=3t_{k-1}+3$.
This implies
%recurrence solves to 
$t_k=\frac{11}{9}3^k-\frac{2}{3}$.  We now describe a construction of a $t_k$-hop spanner.

\begin{center}
    {\bf Fat Object Construction II}
\end{center}
\begin{enumerate}
    \item Apply Lemma~\ref{lem:quadtree:partition} to the leftmost points of the objects to obtain a set $\Gamma$ of $O(n/r)$ generalized quadtree cells.
    For each $\gamma\in\Gamma$, construct a $t_k$-hop spanner recursively for the objects completely inside $\gamma$.

    \item     For each $\gamma\in\Gamma$, let $P_{\gamma^+}$ be a set of $O_d(c)$ points hitting all objects that intersect $\partial\gamma^+$ and have side length at least $\ell_{\gamma^+}/(2d^*)$, where $\ell_{\gamma^+}$ denotes the side length of $\gamma^+$.  Similarly, let $P_{\gamma^-}$ be a set of $O_d(c)$ points hitting all objects that intersect $\partial\gamma^-$ and have side length at least $\ell_{\gamma^-}/(2d^*)$, where $\ell_{\gamma^-}$ denotes the side length of $\gamma^-$.
    Let $P_\gamma=P_{\gamma^+}\cup P_{\gamma^-}$.
    For each object $u$ completely inside $\gamma$ and each point $p\in P_\gamma$,
    add an edge between $u$ and an arbitrary object that is hit by $p$ and intersects $u$ (if exists) to $\widehat{G}$.  At most $O_d(cn)$ edges are added this way.

    \item
    Let $P=\bigcup_{\gamma\in\Gamma} P_\gamma$.
    \emph{Assign} each object $u$ that is hit by $P$ to an arbitrary $p\in P$ that hits~$u$.  For each $p\in P$, build a star $S(p)$ connecting all objects assigned to $p$, with the center chosen arbitrarily.  Add the edges of these stars to $\widehat{G}$.  At most $O(n)$ edges are added this way.

    \item For each $p\in P$, define the new object $U(p)$ to be the union of the objects in $S(p)$. Recursively construct a $t_{k-1}$-hop spanner $\widehat{H}$ for these new objects, which are $(4^dc)$-fat and $(2d^*)$-aligned by Observation \ref{obs:fat-object-union}.
    For each edge $U(p)U(p')$ in the spanner $\widehat{H}$, add an edge $ww'$ to $\widehat{G}$, where $w\in S(p)$ and $w'\in S(p')$ are intersecting objects chosen arbitrarily.
\end{enumerate}

\subparagraph*{Hop stretch.}
For any edge $uv$, if both $u$ and $v$ are completely inside a generalized quadtree cell in $\Gamma$, then
$u$ and $v$ are connected by $t_k$ hops by induction.  

Otherwise, consider the case where neither $u$ nor $v$ are completely inside a generalized quadtree cell in $\Gamma$.  Then $u$ intersects $\partial\gamma$ and
$v$ intersects $\partial\gamma'$ for some $\gamma,\gamma'\in\Gamma$.
Observe that $u$ has side length at least $\ell_{\gamma^+}/(2d^*)$ (resp.\ $\ell_{\gamma^-}/(2d^*)$) if $u$ intersects $\partial\gamma^+$
(resp.\ $\partial\gamma^-$), because $u$ is $(2d^*)$-aligned.  So, $u$ is hit by $P_\gamma$.  Similarly, $v$ is hit by $P_{\gamma'}$.  Thus, $u$ and $v$ belong to two stars $S(p)$ and $S(q)$ from step~3 for some $p,q\in P$.  The two objects $U(p)$ and $U(q)$ intersect and, by induction, are connected by $t_{k-1}$ hops in the spanner $\widehat{H}$.  Inside a star, 2 hops suffice.
Thus, $u$ and $v$ are connected by $3t_{k-1}+2$ hops in $\widehat{G}$ (see Figure \ref{fig:fat-object-ii}).

Lastly, consider the case when exactly one of the objects, say, $u$, is completely inside a generalized quadtree cell $\gamma$ in $\Gamma$, and the other object $v$ intersects $\partial\gamma$. Then, $v$ is hit by some point $q\in P_\gamma$. By step~2, the spanner $\widehat{G}$ connects $u$ to some object $v'$ that is hit by the same point $q$. Then $v$ and $v'$ belong to two stars $S(p)$ and $S(p')$ from step~3 for some $p,p'\in P$. The two objects $U(p)$ and $U(p')$ intersect. By the same argument in the previous case, $v$ and $v'$ are connected by $3t_{k-1}+2$ hops in $\widehat{G}$, and so $u$ and $v$ are connected by $3t_{k-1}+3$ hops.

\begin{figure}
    \centering
    \includegraphics[page=2,scale=1.1]{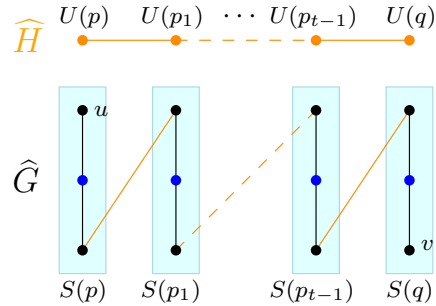}
    \caption{Assume that $u\in S(p)$ and $v\in S(q)$ for some $p,q\in P$. If $\widehat{H}$ connects $U(p)$ and $U(q)$ by $t$ hops, then $\widehat{G}$ connects $u$ and $v$ by $3t+2$ hops. Stars in $\widehat{G}$ are shown as blue boxes. The blue dots represent the centers of these stars.}
    \label{fig:fat-object-ii}
\end{figure}

\subparagraph*{Sparsity.} 
The size of the $t_k$-hop spanner for $c$-fat $(2d^*)$-aligned objects follows the recurrence
\begin{equation*}
    S_{k,c}(n)\ \leq \max_{n_1,n_2,\ldots\le r:\ \sum_i n_i \le n} \left( \sum_i S_{k,c}(n_i)  + S_{k-1,4^dc}(O_d(cn/r)) + O_d(cn)\right).
\end{equation*}
For the base case, we have $S_{1,c}(n)=O_d(cn\log n)$
by Theorem~\ref{thm:fat-object-graph-d-dimension-basic}.
For $k>1$, we choose $r=\alpha_{k-1}(n)$.  It is straightforward to verify by induction that $S_{k,c}(n)=O_{d,k,c}(n\alpha_k(n))$.

\begin{theorem}
    \label{thm:fat-object-graph-ackerman}
    The intersection graph of $n$ $c$-fat objects in $\Real^d$ admits a $(\frac{11}{9}3^k-\frac{2}{3})$-hop spanner of size $O_{d,k,c}(n\alpha_k(n))$ for any $k\ge 1$.
\end{theorem}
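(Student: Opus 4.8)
The plan is to prove the theorem by induction on $k$, realizing the construction \textbf{Fat Object Construction II} stated above; the real work is to confirm that its hop‑stretch and size analyses close. First I would invoke the reduction already set up through Lemma~\ref{lem:quadtree-shifting}: at the cost of a constant factor $d^*=2d+1$ in size it suffices to treat $c$‑fat objects that are $(2d^*)$‑aligned. The base case $k=1$ is then exactly Theorem~\ref{thm:fat-object-graph-d-dimension-basic}, a $3$‑hop spanner of size $O_d(cn\log n)$.

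For $k\ge 2$ the recursive structure is a \emph{double} recursion. Using Lemma~\ref{lem:quadtree:partition} I partition the leftmost points of the objects into $O(n/r)$ generalized quadtree cells with $\le r$ points each, and recurse at the \emph{same} level $k$ on the objects completely inside each cell. A key point is that every object is completely inside at most one cell, so the subproblem sizes sum to at most $n$---there is no boundary blow‑up, unlike the string‑graph construction. An object straddling a cell boundary $\partial\gamma^{\pm}$ must have side length $\ge \ell_{\gamma^{\pm}}/(2d^*)$ because it is $(2d^*)$‑aligned, hence is caught by the $O_d(c)$‑point hitting set guaranteed by $c$‑fatness; collecting all these points into a single set $P$ and forming one star $S(p)$ per $p\in P$, together with the one ``assignment'' edge per object in step~2, reduces every boundary‑straddling edge to a few hops plus a query between two stars. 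The crucial ingredient is Observation~\ref{obs:fat-object-union}: the union $U(p)$ of a star is again $(2d^*)$‑aligned and $(4^dc)$‑fat, so I can recurse at level $k-1$ on the $O(n)$ objects $\{U(p):p\in P\}$ and translate each spanner edge $U(p)U(p')$ back to an intersecting pair $w\in S(p),\ w'\in S(p')$.

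The hop‑stretch bound then follows from the case analysis in the text: if $u$ and $v$ lie inside a common generalized cell, induction gives $t_k$ hops; otherwise each of $u,v$ reaches a point of $P$ within at most one edge, the two associated stars have intersecting unions, so a $t_{k-1}$‑hop path in the level‑$(k-1)$ spanner expands to at most $3t_{k-1}+3$ hops; setting $t_1=3$ and $t_k=3t_{k-1}+3$ yields the stated hop bound. For the size, one solves $S_{k,c}(n)\le \max_{\sum_i n_i\le n,\ n_i\le r}\bigl(\sum_i S_{k,c}(n_i)+S_{k-1,4^dc}(O_d(cn/r))+O_d(cn)\bigr)$ by setting $r=\alpha_{k-1}(n)$: the inductive hypothesis gives $S_{k-1,4^dc}(O_d(cn/r))=O_{d,k,c}(n)$, so the non‑recursive cost at each level is $O_{d,k,c}(n)$; since subproblem sizes sum to $\le n$ and the depth of the recursion is the number of times $\alpha_{k-1}$ must be applied to $n$ to reach a constant, namely $\alpha_k(n)$, the total is $O_{d,k,c}(n\alpha_k(n))$.

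The hard part to track is the interplay of the two recursions. The fatness parameter is multiplied by $4^d$ each time the $k$‑index drops, so after the at most $k$ levels of that recursion it is $4^{dk}c$, still a constant for fixed $d,k$---this is where the $O_{d,k,c}(\cdot)$ dependence comes from. Equally important is that the level‑$(k-1)$ call acts on only $O(n)$ new objects and that its contribution collapses to $O_{d,k,c}(n)$ precisely under $r=\alpha_{k-1}(n)$; balancing these two effects is what forces the inverse‑Ackermann hierarchy into the final size. The remaining bookkeeping---the $O_d(c)$ hitting‑set bounds from covering $\partial\gamma^{\pm}$ by small hypercubes, the alignedness of unions, and the trivial small‑$n$ base of the inner recursion---is routine.
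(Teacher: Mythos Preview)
Your proposal is correct and follows essentially the same approach as the paper: the same quadtree‐partition double recursion (Fat Object Construction~II), the same case split for hop stretch yielding $t_k=3t_{k-1}+3$, and the same size recurrence solved with $r=\alpha_{k-1}(n)$. A couple of minor inaccuracies worth tightening: the ``assignment'' step you attribute to step~2 is actually step~3 in the construction (step~2 adds up to $|P_\gamma|=O_d(c)$ edges per object inside a cell, not one); and the level-$(k-1)$ call is on $O_d(cn/r)$ objects, not merely $O(n)$---this sharper count is what makes $S_{k-1,4^dc}(O_d(cn/r))=O_{d,k,c}((n/r)\cdot\alpha_{k-1}(n))=O_{d,k,c}(n)$ go through under $r=\alpha_{k-1}(n)$.
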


\subparagraph*{Remarks.}
Because the fatness parameter $c$ grows as a function of $k$ during recursion, the constant factor in the above size bound depends on $k$.  

Like in the previous section, it is also possible to obtain an intermediate result, namely, a 6-hop spanner with $O(n\log\log n)$ size.

\section{Objects with (Near) Linear Union Complexity in $\R^2$}\label{sec:union}

In this section, we describe a different approach
to construct hop spanners, using the shallow cutting lemma %, which was
introduced by Matou\v sek~\cite{matousek1992reporting}.  The variant below 
can be found in~\cite{chekuri2012set}. 

\begin{lemma}[Shallow cutting]
    \label{thm:shallow-cutting-disks}
    Consider a family of \emph{well-behaved}\footnote{
    See ~\cite{chekuri2012set} for a precise definition.
    Most families of objects in $\R^2$, such as disks, pseudodisks, etc.\ are well-behaved. 
    }
    objects in $\R^2$,
    such that the union of any $n$ objects has complexity at most $\UU(n)$, assuming that $\UU(n)/n$ is nondecreasing.
    Given a set of $n$ objects in this family and parameters $r$ and $k$, there exists a collection of $O((rk/n+1)^2\UU(n/k))$ cells, such that (i)~each cell intersects the boundaries of at most $n/r$ objects, and (ii)~the cells cover all points of depth at most $k$.  Here, the \emph{depth} of a point $p$ is the number of objects that contain $p$.
\end{lemma}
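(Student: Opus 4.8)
The statement is a mild variant of Matou\v sek's shallow cutting lemma, and for the purposes of this paper the cleanest route is simply to invoke it from~\cite{chekuri2012set,matousek1992reporting}; but here is how one would prove it from scratch, via the standard random-sampling recipe for cuttings (Clarkson--Shor and Chazelle--Friedman), with the union-complexity function $\UU(\cdot)$ playing the role that lower-envelope complexity plays in the halfplane case. The first step is to record that for any $m$ objects of the family and any integer $\ell\ge 1$, the vertical (trapezoidal) decomposition of the ``$\le\ell$-level'' of the arrangement of their boundary curves -- the part of depth at most $\ell$ -- consists of $O(\ell^2\,\UU(m/\ell))$ cells. This is a direct Clarkson--Shor argument from the hypothesis that the union of any $t$ objects has complexity $O(\UU(t))$; monotonicity of $\UU(t)/t$ is exactly what makes the summation work, and as a sanity check for halfplanes ($\UU(t)=\Theta(t)$) it reduces to the familiar $O(\ell^2\cdot m/\ell)=O(m\ell)$ bound on the $\le\ell$-level of $m$ lines.

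Next I would take a random sample $R$ of the input, keeping each object independently with probability $p=\Theta(r/n)$ (so $|R|=\Theta(r)$), and let $\mathcal D$ be the vertical decomposition of the ``$\le\ell$-level'' of the arrangement of $R$, with $\ell=\Theta(pk)=\Theta(rk/n)$. Two worst-case guarantees have to be extracted from this random object. (i)~\emph{Coverage}: every point of depth $\le k$ in the original arrangement must lie in some cell of $\mathcal D$. Such a point has expected $R$-depth $pk$, so ``being in the $\le\ell$-level of $R$'' holds in expectation; upgrading this to \emph{all} depth-$\le k$ points at once is an $\varepsilon$-net--type statement over the (bounded-VC-dimension) family of candidate trapezoids -- this is the ``shallow'' refinement of the cutting/net theorem. (ii)~\emph{Conflict size}: a cell of $\mathcal D$ is determined by $O(1)$ objects of $R$, and if its conflict list (the objects whose boundary crosses it) has size $t$, the probability that it survives into the $\le\ell$-level of $R$ decays exponentially in $pt$, up to a polynomial-in-$\ell$ factor from the shallowness, by the Chazelle--Friedman exponential-decay lemma; summing this tail over all $n^{O(1)}$ candidate cells shows that, with positive probability, every cell of $\mathcal D$ is crossed by at most $1/p=O(n/r)$ object boundaries. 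In both (i) and (ii), using the decay/net bound phrased via the level complexity rather than a crude union bound is what keeps the estimates free of a spurious $\log n$ factor.

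Finally, I would count the cells by applying the first step to the $\Theta(r)$-object sample $R$ at level $\ell=\Theta(rk/n)$, obtaining $|\mathcal D|=O(\ell^2\,\UU(|R|/\ell))=O\!\big((rk/n)^2\,\UU(n/k)\big)$; when $rk/n<1$ the level is a constant and the count is $O(\UU(r))=O(\UU(n/k))$ by monotonicity of $\UU$, which is why the bound is written with $(rk/n+1)^2$. The main obstacle throughout is turning the ``in expectation'' sampling estimates into the two deterministic guarantees (i) and (ii) without incurring extra logarithmic factors -- this is exactly the content of the shallow-cutting theorem, which is why citing~\cite{chekuri2012set} is the expedient choice here.
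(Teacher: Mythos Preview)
The paper does not prove this lemma at all; it is quoted as a known result, with the sentence ``The variant below can be found in~\cite{chekuri2012set}'' immediately preceding the statement. You correctly identify this, and your first sentence---simply invoking~\cite{chekuri2012set,matousek1992reporting}---already matches the paper's treatment exactly.

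Your additional from-scratch sketch is a bonus that the paper does not attempt. It follows the standard Clarkson--Shor / Chazelle--Friedman template and is essentially the right outline: bound the $\le\ell$-level complexity via random sampling against the union-complexity function, sample with probability $p=\Theta(r/n)$, decompose the $\le\Theta(pk)$-level of the sample, and use exponential decay to control conflict-list sizes. The one place to be slightly careful is step~(i), coverage: as you note, turning ``expected depth $\le pk$'' into ``\emph{every} depth-$\le k$ point is covered'' is not quite an $\varepsilon$-net argument on trapezoids but rather a Chernoff bound on the depth of each vertex of the full arrangement (or of the decomposition), combined with the observation that depth is constant within a cell; this is routine but worth stating precisely if you were to flesh this out. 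Otherwise the sketch is sound and faithful to how~\cite{chekuri2012set} actually proceeds.
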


\begin{center}
    {\bf Construction via Shallow Cuttings}
\end{center}
\begin{enumerate}
    \item For each $i=1,\ldots,\log n$,
    apply Lemma~\ref{thm:shallow-cutting-disks} with $k=2^i$ and $r=n/2^{i-2}$ to obtain a collection $\Xi_i$ of $O(\UU(n/2^i))$ cells.  We may assume that each cell $\Xi_i$ contains at least one point of depth at most $2^i$ (otherwise, the cell may be removed).
    \item For each $i$ and for each $\xi\in\Xi_i$ such that there exists an object $s(\xi)$ that contains $\xi$ completely, build a star centered at $s(\xi)$ connecting all objects that intersect $\xi$. Add the edges of these stars to $\widehat{G}$. Since there are at most $2^i$ objects that contain $\xi$ and $2^{i-2}$ objects whose boundaries intersect $\xi$, the number of edges added is $O(\sum_{i=1}^{\log n} \UU(n/2^i)\cdot 2^i) = O(\UU(n)\log n)$.
\end{enumerate}

\subparagraph*{Hop stretch.} 
For any edge $uv$, pick an arbitrary point $p$ in the intersection of $u$ and $v$, and let $k$ be the depth of $p$. Let $i$ be the number such that $2^{i-1}\leq k<2^i$. Let $\xi$ be the cell in $\Xi_i$ that contains $p$.  At least $2^{i-1}$ objects contain $p$, but at most $2^{i-2}$ objects have boundaries intersecting 
$\xi$.  Thus, there must
exist an object $s(\xi)$ that completely contains $\xi$.
Then, $\widehat{G}$ contains the edges $s(\xi)u$ and $s(\xi)v$, and so $u$ and $v$ are connected by 2 hops (see Figure~\ref{fig:pseudo-disks}).

\begin{figure}
    \centering
    \includegraphics[scale=1.2]{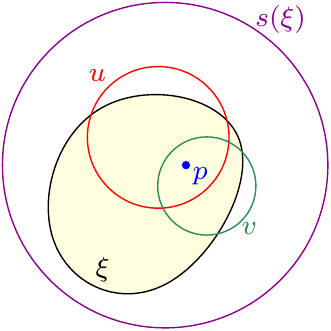}
    \caption{Objects $u$ and $v$ intersect, $p$ is an arbitrary point in $u\cap v$, and $\xi\in\Xi_i$ is the cell containing $p$. Our counting argument shows that there must exist an object $s(\xi)$ completely containing $\xi$, and thus $\widehat{G}$ must contain the path $u\rightarrow s(\xi)\rightarrow v$.}
    \label{fig:pseudo-disks}
\end{figure}

\begin{theorem}
    \label{thm:pseudo-disk-graph}
    Consider a family of well-behaved objects in $\R^2$,
    such that the union of any $n$ objects has complexity at most $\UU(n)$, assuming that $\UU(n)/n$ is nondecreasing.
    The intersection graph of $n$ objects in this family admits a 2-hop spanner of size $O(\UU(n)\log n)$.
\end{theorem}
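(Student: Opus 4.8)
The plan is to avoid recursion entirely and build the whole spanner from $\Theta(\log n)$ independent layers of shallow cuttings, one layer per ``depth scale'' $2^i$. For each $i=1,\ldots,\log n$ I would invoke Lemma~\ref{thm:shallow-cutting-disks} with $k=2^i$ and $r=n/2^{i-2}$. The point of this particular choice is that then $rk/n=\Theta(1)$, so the cell count collapses from $O((rk/n+1)^2\UU(n/k))$ to $O(\UU(n/2^i))$; moreover each cell in the resulting collection $\Xi_i$ is met by the boundaries of at most $n/r=2^{i-2}$ objects, while the union of all cells covers every point of depth at most $2^i$. From each cell $\xi\in\Xi_i$ that happens to lie entirely inside some input object $s(\xi)$, I would add to $\widehat{G}$ the star centered at $s(\xi)$ whose leaves are \emph{all} objects that intersect $\xi$ (both the ones containing $\xi$ and the ones whose boundary merely crosses it).

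For the hop bound, given an edge $uv$ I would fix a point $p\in u\cap v$, let $k$ be its depth, and pick the unique $i\le\log n$ with $2^{i-1}\le k<2^i$. Since $p$ has depth $<2^i$ it lies in some cell $\xi\in\Xi_i$. The crux is a pigeonhole count at that cell: at least $2^{i-1}$ objects contain $p$ and hence meet $\xi$, but at most $2^{i-2}$ of them can have a boundary crossing $\xi$, so at least $2^{i-1}-2^{i-2}\ge 1$ of them must contain $\xi$ outright --- this produces the witness $s(\xi)$. Because $u$ and $v$ both contain $p\in\xi$, they each intersect $\xi$, so $\widehat{G}$ holds the edges $s(\xi)u$ and $s(\xi)v$ and $u,v$ are within $2$ hops.

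For the size bound I would count star edges layer by layer. Layer $i$ has $O(\UU(n/2^i))$ stars, and each star rooted at a cell $\xi$ has at most (number of objects containing $\xi$) $+$ (number whose boundary crosses $\xi$) $\le 2^i+2^{i-2}=O(2^i)$ leaves, since any point of $\xi$ has depth at most $2^i$. Hence layer $i$ contributes $O(\UU(n/2^i)\cdot 2^i)$ edges; invoking the hypothesis that $\UU(n)/n$ is nondecreasing gives $\UU(n/2^i)\le \UU(n)/2^i$, so each layer contributes $O(\UU(n))$ edges and the $\log n$ layers total $O(\UU(n)\log n)$.

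The main obstacle I anticipate is not any single computation but aligning the three demands on the shallow cutting simultaneously: (a) the cell count must be $O(\UU(n/k))$, which forces $rk/n=O(1)$; (b) each cell must be crossed by strictly fewer object boundaries than the number of objects containing a depth-$\Theta(k)$ point, so that the pigeonhole step yields a fully containing object; and (c) the cutting must cover all points up to depth $k$ so that every edge's witness point lands in some cell. The setting $k=2^i$, $r=n/2^{i-2}$ threads all three, and after that the analysis is routine; the one mild subtlety worth flagging is that a star's leaf set must include the objects that only cross $\xi$, not merely those containing it, since that is exactly what makes \emph{both} endpoints of the edge adjacent to $s(\xi)$.
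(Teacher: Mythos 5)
Your proposal is essentially identical to the paper's proof: the same choice of $k=2^i$, $r=n/2^{i-2}$ across $\log n$ layers of shallow cuttings, the same star construction from cells with a fully containing witness object, and the same pigeonhole argument $2^{i-1}-2^{i-2}\ge 1$ for the hop bound. One small imprecision: you justify the $O(2^i)$ leaf bound by saying ``any point of $\xi$ has depth at most $2^i$,'' which the shallow cutting lemma does not guarantee; the correct (and sufficient) observation, as the paper makes, is that each retained cell contains at least one point of depth $\le 2^i$, which already caps the number of objects fully containing the cell at $2^i$.
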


For example, it is known that $\UU(n)=O(n)$ for disks and pseudodisks in the plane~\cite{KedemLPS86}, and $\UU(n)=O(n\log^*n)$ for fat triangles in the plane~\cite{AronovBES14}.  So, we obtain 2-hop spanners of size $O(n\log n)$ for disks and pseudodisks, and 
size $O(n\log n\log^*n)$ for fat triangles.

\subparagraph*{Remarks.}  
It is possible to reduce the size bound to $O(\UU(n)\log\log n)$ with 5 hops (by using fewer shallow cuttings, with $k=2^{(1+\delta)^i}$, for $i=1,\ldots,O(\log\log n)$), but this approach does not appear to yield further improvement for larger hop stretch.

\section{Axis-Aligned Rectangles in $\R^2$}\label{sec:rect}

In this section, we describe a 3-hop spanner for the case where the input objects consist of horizontal line segments $H$ and vertical line segments $V$ in the plane.  Spanners for the more general case of axis-aligned rectangles will then follow.

We first consider the special case where all vertical segments are lines. This problem is 1-dimensional in the sense that the $y$-coordinates of the segments are irrelevant. Borrowing Conroy and T\'oth's technique \cite{conroy2022hop} for 1D interval graphs, we divide the $x$-axis into disjoint intervals $\mathcal{I}=\{I_1,\ldots,I_\ell\}$ as follows (see Figure \ref{fig:seg-line}):

\begin{enumerate}
    \item $I_0=\{x_0\}$ is the interval containing only the $x$-coordinate of the leftmost endpoint among all horizontal segments.
    \item For integers $k\geq 1$, $I_k=(x_{k-1},x_k]$, where $x_{k-1}$ is the right boundary of $I_{k-1}$, and $x_k$ is the largest number for which there exists a line segment $h_k=[x'_k,x_k]\in H$ such that $x'_k\leq x_{k-1}$. We say that $h_k$ is the {\em covering segment} of $I_k$.
\end{enumerate}

\begin{figure}
    \centering
    \includegraphics[scale=0.75]{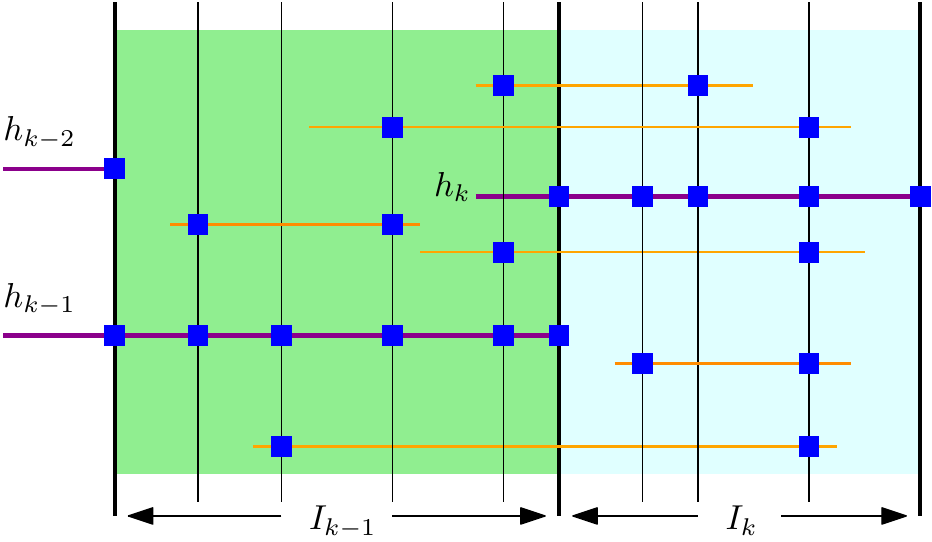}
    \caption{Illustration of the disjoint intervals $I_{k-1}$ and $I_{k}$, and the covering segments $h_{k-2},h_{k-1},h_k$ (drawn in purple). %Segments that do not cross interval boundaries are drawn in orange. 
    The blue squares indicate which intersections are kept.}
    \label{fig:seg-line}
\end{figure}

\begin{lemma}
    \label{lem:seg-line}
    The intersection graph $G$ of $n$ horizontal segments and vertical lines admits a 3-hop spanner $\widehat{G}$ with $O(n)$ edges.
\end{lemma}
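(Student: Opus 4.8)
The plan is to build the 3-hop spanner $\widehat{G}$ by exploiting the interval decomposition $\mathcal{I}=\{I_0,I_1,\ldots,I_\ell\}$ and its covering segments. First I would record the structural facts that make the decomposition useful: each vertical line $v$ (with $x$-coordinate $x$) falls in exactly one interval $I_k$, and by maximality of $x_k$ every horizontal segment $h$ that reaches into $I_k$ or beyond must have its left endpoint $>x_{k-1}$ \emph{unless} $h$ is long enough that it also covers $I_{k-1}$. Concretely, if a horizontal segment $h=[a,b]$ crosses $v\in I_k$, then either $b\le x_k$ (so $h$ lies within $I_{k-1}\cup I_k$ roughly, i.e.\ $h$ starts in $I_{k-1}$ or $I_k$), or $b> x_k$, in which case by the greedy choice of $x_k$ we must have $a> x_{k-1}$, so again $h$'s left endpoint lies in $I_k$. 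The upshot is a charging scheme: every crossing pair $(h,v)$ can be associated to the interval $I_k\ni v$, and the horizontal segments crossing lines in $I_k$ are controlled either by the covering segments $h_{k-1},h_k$ or by the set of horizontal segments whose left endpoint lies in $I_k$.

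Next I would describe which edges to keep. The natural candidates are: (a) for each interval $I_k$, the edges between its covering segment $h_k$ and every vertical line in $I_k$, and the edge between $h_k$ and $h_{k+1}$ (covering segments of consecutive intervals intersect a common vertical line or overlap, so they are adjacent or near-adjacent); (b) for each horizontal segment $h$ with left endpoint in $I_k$, one edge from $h$ to an arbitrary vertical line it crosses in $I_k$ (if any), plus possibly one ``to the left'' connecting $h$ to $h_k$ or to a line near $x_{k-1}$; (c) symmetric bookkeeping for vertical lines that cross no horizontal segment at all (these are isolated and need nothing). Since each horizontal segment is assigned to a unique interval via its left endpoint, and each vertical line to a unique interval, the total count is $O(n)$: there are $O(n)$ covering-segment-to-line edges (each line charged once), $O(\ell)=O(n)$ consecutive-covering-segment edges, and $O(n)$ stub edges (one per horizontal segment).

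For the hop-stretch argument, take any edge $hv\in E(G)$ with $h$ horizontal, $v$ vertical, $v\in I_k$. If $h=h_k$ we kept the edge directly. Otherwise we route $h\to(\text{some vertical line }v')\to h_k\to v$ or $h\to h_k\to v$ depending on whether $h$'s left endpoint is in $I_k$ or to its left. The key case analysis: if $h$ extends left of $x_{k-1}$, then $h$ crosses the same vertical lines that $h_{k-1}$ does near the boundary, and $h_{k-1}$–$h_k$ is a kept edge, giving $h\to h_{k-1}\to h_k\to v$ (3 hops); if $h$ starts inside $I_k$, then $h$ and $h_k$ both cross $v$ or both cross some common line, and using the stub edge from $h$ we get at most $h\to v'\to h_k\to v$ (3 hops). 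The \textbf{main obstacle} I anticipate is pinning down exactly this last routing: ensuring that a horizontal segment starting in $I_k$ but \emph{not} crossing $v$ directly can still be joined to $h_k$ in one hop through a shared vertical line, and that the covering segment $h_k$ genuinely crosses \emph{every} vertical line in $I_k$ — this should follow because $h_k=[x'_k,x_k]$ with $x'_k\le x_{k-1}$ spans all of $(x_{k-1},x_k]=I_k$, so $h_k$ crosses every vertical line in $I_k$ whose $y$-interval it meets; handling lines (full vertical lines, not segments) makes the $y$-coordinate irrelevant, which is precisely why the problem is ``1-dimensional'' and why $h_k$ meets all of them. Once that observation is cleanly stated, the three cases each close in $\le 3$ hops and the size bound is immediate.
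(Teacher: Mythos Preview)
Your overall structural observation is right—any horizontal segment $h=[a,b]$ that crosses a vertical line $v\in I_k$ must have either its left endpoint in $I_k$ (when $a>x_{k-1}$) or, by maximality of $x_k$, its right endpoint in $I_k$ (when $a\le x_{k-1}$ forces $b\le x_k$). This is exactly the engine behind the paper's proof. However, your routing in the second case is broken. You propose the path $h\to h_{k-1}\to h_k\to v$, but $h$, $h_{k-1}$, and $h_k$ are all \emph{horizontal} segments, and in this intersection graph two horizontal segments at distinct $y$-coordinates are never adjacent. You even flag this yourself (``adjacent or near-adjacent''), but then still treat $h_{k-1}h_k$ and $hh_{k-1}$ as single hops; neither edge is available in general, so this path is not in $\widehat G$.

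The fix, and what the paper actually does, is symmetric to your left-endpoint stub: for every horizontal segment $h$, also keep the edge to the \emph{rightmost} vertical line $v_R(h)$ crossing $h$ (as well as the leftmost $v_L(h)$). This costs only two edges per $h$, so the size stays $O(n)$. Now if $hv\in E(G)$ with $v\in I_k$, one of $v_L(h),v_R(h)$ lies in $I_k$ by the endpoint observation above; call it $v'$. Since the covering segment $h_k$ spans all of $I_k$ (vertical \emph{lines} make the $y$-coordinate irrelevant, as you noted), $h_k$ crosses both $v'$ and $v$, giving the bipartite 3-hop path $h\to v'\to h_k\to v$. No horizontal--horizontal hops, no consecutive-covering-segment edges, and no ``to the left'' bookkeeping are needed.
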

\begin{proof}
    For each interval $I=(x_L,x_R]$, let $h_I$ be the covering segment. Keep all intersections that involve $h_I$ in the slab $I\times\Real$. Finally, for every segment $h\in H$, keep the intersections with the leftmost and the rightmost vertical line that intersects $h$, denoted $v_L(h)$ and $v_R(h)$ respectively. Let $\widehat{G}$ be the subgraph that includes an edge for each intersection we keep.
    
    \subparagraph{Hop stretch.} Consider $h\in H$ and $v\in V$ that intersect. Either $v_L(h)$ or $v_R(h)$ is in the same interval $I\in\mathcal{I}$ as $v$; say it is $v_R(h)$. Both $v$ and $v_R(h)$ intersect the covering segment $h_I$. Thus, $\widehat{G}$ contains the 3-hop path $h\rightarrow v_R(h)\rightarrow h_I\rightarrow v$.
    %$(h,v_R(h))$, $(v_R(h),h_I)$, and $(h_I,v)$.
    
    \subparagraph{Sparsity.}
    %Each horizontal segment intersects at most two intervals $I\in\mathcal{I}$, so we kept $O(m)$ intersections for $v_I$ over all intervals $I$.
    For each covering segment $h_I$, we have only kept intersections in the interval $I$, so we have kept $O(n)$ intersections over all intervals $I$. For each $h\in H$ that is not a covering segment, we have kept only two intersections involving $h$.
\end{proof}

Using Lemma \ref{lem:seg-line}, the standard binary divide-and-conquer along the $y$-axis gives us a 3-hop spanner with $O(n\log n)$ edges for the case of horizontal and vertical line segments. Given a horizontal slab $\sigma$, we construct the 3-hop spanner as follows:

\begin{enumerate}
    \item Construct a 3-hop spanner according to Lemma \ref{lem:seg-line} to handle the intersections between horizontal segments and \emph{long} vertical segments, i.e., vertical segments that cross the entire slab $\sigma$.  Then remove the long vertical segments.
    \item Divide $\sigma$ into two horizontal subslabs, each containing half the number of horizontal segments.
    %plane along the $y$-axis. 
    For each of the two subslabs,
    %half of the division, 
    construct a 3-hop spanner recursively.
\end{enumerate}

Each segment, whether horizontal or vertical, appears in $O(\log n)$ of the recursive calls. Therefore, the total number of edges in the spanner is bounded by $O(n\log n)$. Thus, we have proved the following:
%first result for orthogonal line segments.

\begin{lemma}
    The intersection graph of $n$ horizontal/vertical segments admits a 3-hop spanner of size $O(n\log n)$. 
\end{lemma}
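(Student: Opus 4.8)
The plan is to lift the one-dimensional spanner of Lemma~\ref{lem:seg-line} to two dimensions by a divide-and-conquer along the $y$-axis. At a generic recursive call I am handed a horizontal slab $\sigma$, the horizontal segments of $H$ whose $y$-coordinate lies in $\sigma$, and the portions inside $\sigma$ of the vertical segments of $V$ that meet $\sigma$. Call a vertical segment \emph{long} for $\sigma$ if its $y$-extent contains all of $\sigma$, so that inside $\sigma$ it behaves exactly like a full vertical line. I would process $\sigma$ in three steps: (1)~run Lemma~\ref{lem:seg-line} on the horizontal segments of $\sigma$ together with the long vertical segments of $\sigma$ --- a genuine one-dimensional instance, since the long verticals act as lines, so Lemma~\ref{lem:seg-line} applies verbatim --- and put its $O(n_\sigma)$ edges into $\widehat{G}$; (2)~delete the long vertical segments and split $\sigma$ by a horizontal line into two subslabs, each carrying half of the remaining horizontal segments, recursing on each; (3)~when a slab contains at most one horizontal segment $h$, directly insert the edge $hv$ for every vertical segment $v$ still present in the slab that intersects $h$, which is only $O(n_\sigma)$ edges because there is a single $h$.

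For the hop count, fix an intersecting pair $h\in H$, $v\in V$; the $y$-coordinate $y_h$ of $h$ lies in the $y$-extent of $v$. View the recursion tree as a weight-balanced binary tree on the $y$-coordinates of the horizontal segments; it has depth $O(\log n)$ and, just as in a segment tree, the $y$-extent of any vertical segment decomposes into $O(\log n)$ canonical nodes. If some canonical node of $v$'s $y$-extent contains $y_h$, then at the corresponding recursive call $v$ is long and $h$ is still present, so step~(1) invokes Lemma~\ref{lem:seg-line} on an instance containing both of them, yielding a $3$-hop path from $h$ to $v$ inside $\widehat{G}$. Otherwise the root-to-leaf path followed by $y_h$ never enters a slab fully contained in $v$'s extent; hence $v$ is never deleted along that path and survives into the leaf slab, where $h$ is the unique horizontal segment, so step~(3) inserts the edge $hv$ directly. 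Either way $h$ and $v$ are within $3$ hops.

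For the size, I charge edges to slab/segment incidences. Each horizontal segment falls into exactly one slab per level, hence into $O(\log n)$ slabs total. Each vertical segment $v$ is long in precisely the $O(\log n)$ canonical slabs of its $y$-extent (where it is removed in step~(2)) and is otherwise present only in the at most two ``boundary'' slabs per level that straddle the endpoints of its $y$-extent; so $v$ touches $O(\log n)$ slabs as well, and reaches at most two leaves. Since step~(1) produces $O(n_\sigma)$ edges on a slab holding $n_\sigma$ segments, and step~(3) contributes $O(n)$ edges over all leaves, summing over all slabs gives $O(n\log n)$ edges.

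The crux is the sparsity bookkeeping --- the claim that a vertical segment is declared long, and thus fed to Lemma~\ref{lem:seg-line}, in only $O(\log n)$ of the recursive calls. This is exactly the canonical-interval-decomposition property of segment trees, transplanted to a recursion that balances by count of horizontal segments rather than by coordinate; once this is in hand, the hop-stretch and size claims are routine, the only subtlety being the leaf case handled by step~(3).
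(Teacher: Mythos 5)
Your construction is the same divide-and-conquer along the $y$-axis that the paper uses: at each slab, invoke Lemma~\ref{lem:seg-line} on the horizontals together with the vertical segments that span the slab (which act as lines), then split by the median horizontal $y$-coordinate and recurse. Your bookkeeping for the $O(n\log n)$ size (each horizontal appears in one slab per level; each vertical appears in $O(1)$ non-long slabs per level and in $O(\log n)$ canonical slabs overall) is exactly the justification for the paper's one-line claim that ``each segment appears in $O(\log n)$ recursive calls.'' The one place you go beyond the paper's text is the explicit leaf case (your step~(3)): a vertical segment can reach a leaf slab without ever being long, yet still cross the unique horizontal there, and those edges must be inserted directly. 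The paper leaves this base case implicit; your treatment of it is correct and makes the argument airtight without changing the approach.
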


We can extend the results for axis-aligned line segments to axis-aligned rectangles by replacing each rectangle with four line segments, each being one side of the rectangle. We build a spanner %$\widehat{G}_+$ 
for these line segments. If two rectangles intersect, then either their sides intersect, or one rectangle contains the other. The first case reduces to segment intersection. For the case of containment, Conroy and T\'oth~\cite{conroy2022hop} have shown that using $O(n\log n)$ edges, there is a 2-hop spanner for the subgraph that includes only ``corner intersections'', i.e., intersections where one rectangle contains a corner of the other rectangle.

\begin{theorem}
    The intersection graph of $n$ axis-aligned rectangles in $\Real^2$ admits a 3-hop spanner of size $O(n\log n)$.
\end{theorem}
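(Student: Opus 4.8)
The plan is to combine the previous lemma on horizontal/vertical segments with Conroy and T\'oth's corner-intersection spanner, after observing that any two intersecting axis-aligned rectangles fall into one of two cases. First I would replace each input rectangle $R_i$ by the four axis-aligned segments forming its sides, obtaining a set of $4n$ segments, and apply the previous lemma to get a $3$-hop spanner $\widehat{G}_{\mathrm{seg}}$ of size $O(n\log n)$ for their intersection graph. I then transfer $\widehat{G}_{\mathrm{seg}}$ back to the rectangles: for every edge of $\widehat{G}_{\mathrm{seg}}$ joining a side of $R_i$ to a side of $R_j$ with $i\ne j$, put the edge $R_iR_j$ into the output $\widehat{G}$; this contributes $O(n\log n)$ edges. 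On top of this, I add the $O(n\log n)$ edges of Conroy and T\'oth's $2$-hop spanner for the ``corner-intersection'' subgraph, i.e., the subgraph consisting of pairs of rectangles where one contains a corner of the other.

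Next I would argue the hop bound. Fix intersecting rectangles $R_i,R_j$. In Case~1, some side $e$ of $R_i$ crosses some side $e'$ of $R_j$; then $ee'$ is an edge of the segment intersection graph, so $\widehat{G}_{\mathrm{seg}}$ contains a path $e=f_0,f_1,\dots,f_\ell=e'$ with $\ell\le 3$. Writing $R(f)$ for the rectangle owning the segment $f$, each consecutive pair $f_t,f_{t+1}$ intersects, so either $R(f_t)=R(f_{t+1})$ or $R(f_t)R(f_{t+1})$ is an edge that was added to $\widehat{G}$; contracting the repetitions yields a walk in $\widehat{G}$ from $R_i=R(f_0)$ to $R_j=R(f_\ell)$ of length at most $\ell\le 3$. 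In Case~2, no side of $R_i$ crosses a side of $R_j$; a short case analysis shows that for axis-aligned rectangles this forces one rectangle to contain the other (otherwise a horizontal side of one would have to cross a vertical side of the other), so the containing rectangle contains all four corners of the other and this pair lies in the corner-intersection subgraph, hence is joined by $2$ hops in $\widehat{G}$ via the Conroy--T\'oth construction. In all cases $R_i$ and $R_j$ are connected by at most $3$ hops, and the total size is $O(n\log n)$.

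The steps are all routine; the only point requiring a little care is the transfer in Case~1 --- one must check that collapsing consecutive segments belonging to a common rectangle only shortens the path and that the endpoints $f_0,f_\ell$ genuinely correspond to $R_i$ and $R_j$ --- together with the elementary observation that two intersecting axis-aligned rectangles with disjoint boundaries must be nested, which is exactly what licenses invoking the corner-intersection spanner for the remaining case.
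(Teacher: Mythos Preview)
Your proposal is correct and follows essentially the same approach as the paper: replace rectangles by their four boundary segments, invoke the segment lemma for the side-crossing case, and invoke Conroy--T\'oth's corner-intersection $2$-hop spanner for the nested case. You are in fact more explicit than the paper about the transfer step (collapsing a $3$-hop segment path into a rectangle walk of length at most $3$), which the paper leaves implicit.
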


\section{Open Questions}

Although we have obtained almost linear size bounds for
hop spanners in string graphs and fat-object intersection graphs, a remaining question is whether these upper bounds could be further improved to linear, or whether an inverse-Ackermann-type lower bound could be proved.  

Another question is whether near-linear bounds are possible
for other intersection graphs not addressed here, e.g., for simplices in $\R^3$.  Here, one might want to start more modestly with any upper bound better than for general graphs.

% run time

% distance oracles

\bibliographystyle{plainurl}
\bibliography{reference}
%{\refname} % Input your bibliography file

\end{document}